\newtheorem{lem}{Lemma} 
\newcommand {\BL} {\begin{lem}} 
\newtheorem{obs}{Observation} 
\newcommand {\BO} {\begin{obs}} 
\newcommand {\EO} {\end{obs}} 
\newtheorem{defn}{Definition} 
\newcommand {\BDE} {\begin{defn}} 
\newcommand {\EDE} {\end{defn}} 
\newtheorem{cor}  {Corollary} 
\newcommand {\BCR} {\begin{cor}}
\newcommand {\ECR} {\end{cor}}
\newtheorem{thm} {Theorem} 
\newcommand {\BT} {\begin{thm}}
\newcommand {\ET} {\end{thm}}
\newtheorem{fact}  {Fact} 
\newcommand {\BF} {\begin{fact}} 
\newcommand {\EF} {\end{fact}} 
\newtheorem{convention}  {Convention} 
\newcommand {\BC} {\begin{convention}} 
\newcommand {\EC} {\end{convention}} 
\newtheorem{hyp}{Hypothesis}
\newtheorem{clm}  {Claim} 
\newcommand {\BCL} {\begin{clm}} 
\newcommand {\ECL} {\end{clm}} 
\newtheorem{problem}{Problem}
\title{
Simple Reductions from Formula-SAT to Pattern Matching on Labeled Graphs and Subtree Isomorphism
}
\author{
Daniel Gibney\thanks{Dept. of CS, University of Central Florida, Orlando, USA. \textbf{e-mail: } daniel.j.gibney@gmail.com}
\and
Gary Hoppenworth\thanks{Dept. of CS, University of Central Florida, Orlando, USA. \textbf{e-mail: } gary.hoppenworth@gmail.com}
\and 
Sharma V. Thankachan\thanks{Dept. of CS, University of Central Florida, Orlando, USA. \textbf{e-mail: } sharma.thankachan@ucf.edu}
}
\date{}
\begin{document}
\maketitle

\begin{abstract}
The CNF formula satisfiability problem (CNF-SAT) has been reduced to many fundamental problems in $\P$ to prove tight lower bounds under the Strong Exponential Time Hypothesis (SETH). Recently, the works of Abboud, Hansen, Vassilevska W. and Williams (STOC’16), and later, Abboud and Bringmann (ICALP’18) have proposed basing lower bounds on the hardness of general boolean formula satisfiability (Formula-SAT). Reductions from Formula-SAT have two advantages over the usual reductions from CNF-SAT: (1) conjectures on the hardness of Formula-SAT are arguably much more plausible than those of CNF-SAT, and (2) these reductions give consequences even for logarithmic improvements in a problem’s upper bounds. 

Here we give tight reductions from Formula-SAT to two more problems: pattern matching on labeled graphs (PMLG) and subtree isomorphism. Previous reductions from Formula-SAT were to sequence alignment problems such as Edit Distance, LCS, and Frechet Distance and required some technical work. This paper uses ideas similar to those used previously, but in a decidedly simpler setting, helping to illustrate the most salient features of the underlying techniques.
\end{abstract}

\newpage
\section{Introduction and Related Work}
The Strong Exponential Time Hypothesis (SETH) has proven to be a powerful tool in establishing conditional lower bounds for many problems with known polynomial-time solutions. However, recent work by Abboud, Hansen, Vassilevska W., and Williams~\cite{DBLP:conf/stoc/AbboudHWW16}, as well as Abboud and Bringmann~\cite{DBLP:conf/icalp/AbboudB18} has sought to use the hardness of general Formula-SAT problems as the basis for fine-grained conditional lower bounds, rather than CNF-SAT and SETH. Since general Formula-SAT contains within it all CNF-SAT instances, Formula-SAT is at least as hard as CNF-SAT. Additionally, when basing conditional lower bounds on Formula-SAT rather than CNF-SAT, the same algorithmic breakthroughs that previously would have violated SETH, now have far more remarkable consequences (see Section \ref{sec:contribution} for examples). This makes it plausible that conjectures based on the hardness of Formula-SAT are more likely to hold than those based on the hardness of CNF-SAT.

Aside from a plausible increase in the robustness of the conjectures, using Formula-SAT as a starting point has the advantage of allowing for tighter hardness results. Previous lower bounds based on SETH have been effective in establishing results of the form: an algorithm running in time $\mathcal{O}(n^{c-\varepsilon})$ for some $\varepsilon > 0$, where the best-known solution has time complexity $\widetilde{\mathcal{O}}(n^{c})$ would violate SETH. Despite this success, SETH has proven less effective at establishing tighter fine-grained hardness results regarding how many logarithmic-factors can be shaved. In fact, the impossibility of proving such a hardness result via fine-grained reductions from CNF-SAT was proven in~\cite{DBLP:conf/icalp/AbboudB18}. Overcoming this by using Formula-SAT as a starting point, in \cite{DBLP:conf/stoc/AbboudHWW16} conditional lower bounds of this form were established for Edit Distance and Longest Common Subsequence (LCS).  In \cite{DBLP:conf/icalp/AbboudB18}, the results on LCS were further extended to show that an $\mathcal{O}(n^2/ \log^{7+\varepsilon} n)$ time solution for LCS would imply major breakthroughs in circuit complexity. As a final example, work in \cite{DBLP:journals/corr/abs-2008-02769} uses reductions from Formula-SAT to analyze which regular expression matching problems can have super-polylog factors shaved from their time complexity, and which cannot.

In this work, we will use Formula-SAT to establish hardness results similar to those listed above, but for two additional fundamental problems, Pattern Matching on Labeled Graphs (PMLG) and Subtree Isomorphism. We describe these problems next.

\bigskip
\noindent
\textbf{Pattern Matching On Labeled Graphs. (PMLG)}
Given an alphabet $\Sigma$, a labeled graph $G$ is a triplet $(V, E, L)$, where $(V, E)$ corresponds to the vertices and edges of a graph, and $L : V \rightarrow \Sigma^+$ is a function that defines a nonempty string (i.e., label) over $\Sigma$ to each vertex in $G$. For any string $S$,  we use $S[..\ell]$ to denote its prefix ending at $\ell$ and $S[\ell..]$ to denote its suffix starting at $\ell$.
We say that a pattern $P$ occurs in $G$ if there is a path $v_1, v_2, \dots, v_m$ in $G$ such that $L(v_1)[\ell ..] \circ L(v_2) \circ \cdots \circ L(v_m)[.. \ell']$ equals $P$ for some $\ell, \ell'$. Given a labeled graph $G$ and a pattern $P$, the PMLG problem is to decide if there exists an occurrence of $P$ in $G$

The PMLG problem began being intensely studied roughly thirty years ago in the context of alignment of strings (equivalent to approximate matching under edits, mismatches, etc.) in \emph{hypertext}. This was initiated by Manber and Wu \cite{manber1992approximate} and underwent several improvements \cite{DBLP:conf/cpm/Akutsu93, DBLP:journals/jal/AmirLL00, DBLP:journals/tcs/Navarro00,DBLP:conf/cpm/ParkK95}. In the case where changes are allowed in the pattern, but not in the graph, the best-known algorithm runs in time $\mathcal{O}(|V|+ |E||P|)$, matching the time complexity of the dynamic programming solution of the exact problem, and is by Rautiainen and Marschall~\cite{rautiainen2017aligning}. In the case where changes are allowed in the graph as well, the problem is NP-complete~\cite{DBLP:journals/jal/AmirLL00}, even for binary alphabet~\cite{DBLP:conf/recomb/JainZGA19}. The work by Equi \textit{et al.} in \cite{DBLP:conf/icalp/EquiGMT19} established the SETH based lower bounds for exact matching. 

\bigskip
\noindent
\textbf{Subtree Isomorphism.}
Given two trees $T_1$ and $T_2$, is $T_1$ contained in $T_2$? This problem has been the subject of extensive study~\cite{DBLP:journals/jal/Chung87,DBLP:conf/caap/Lingas83,DBLP:journals/ipl/LingasK89,DBLP:journals/siamcomp/Reyner77,DBLP:journals/jal/ShamirT99,DBLP:journals/siamcomp/VermaR89}, much of this research dating back several decades. For general trees, both with at most $n$ vertices, the currently best known solution has a time bound that is $\mathcal{O}(n^{\omega})$, where $\omega$ is the exponent on fast-matrix multiplication~\cite{DBLP:journals/jal/ShamirT99}; for rooted, constant maximum degree trees it is $\mathcal{O}(n^2/\log n)$~\cite{DBLP:conf/caap/Lingas83}; and, for ordered trees it is $\mathcal{O}(n \log n)$~\cite{DBLP:journals/siamcomp/ColeH03}. Here we will be considering rooted trees with constant maximum degree. In terms of lower bounds, SETH based quadratic lower bounds for this version of the problem have been established in \cite{DBLP:journals/talg/AbboudBHWZ18}, even for binary rooted trees.

\bigskip
\noindent
\textbf{Road Map.} We will first describe the Formula-SAT problem and deMorgan Formulas in more detail. Following this, we will state our results for PMLG and Subtree Isomorphism in terms of its implications for solving Formula-SAT, along with the resulting corollaries. Section \ref{sec:PMLG} provides the reduction from Formula-SAT to PMLG. The reduction to Subtree Isomorphism is in Section \ref{sec:subtree}. Finally, in Section \ref{sec:discussion} we discuss the similar themes and techniques that appear in both of these reductions.

\subsection{Formula-SAT}

\bigskip
\noindent
\textbf{deMorgan Formulas.} For our purposes, we define a deMorgan formula over $n$ Boolean input variables as a rooted binary tree where each leaf node represents an input variable or its negation, and every internal node represents a logical operator from the set $\{\land, \lor\}$. 
Leaf nodes will be called input gates, and internal nodes will be called AND/OR gates. 
For a given bit assignment $x$, we define $F(x)$ as the binary value output at the root of $F$ when the input bits are propagated from the leaves to the root of $F$.
The size of the formula, which we will denote as $s$, is defined as the number of leaves in the tree.

\begin{problem}[Formula-SAT] Given a deMorgan formula $F$ of size $s$ over $n$ inputs, does there exist an input $x \in \{0, 1\}^n$ such that $F(x) = 1$?
\end{problem}
The set of all Formula-SAT instances obviously contains within it all CNF-SAT instances. Unsurprisingly, due to its generality, it appears harder to derive efficient solutions for Formula-SAT. For CNF-SAT there exists ever-improving upper bounds \cite{ DBLP:journals/endm/BruggemannK04,DBLP:conf/stoc/HansenKZZ19,DBLP:journals/dam/MonienS85,DBLP:journals/jacm/PaturiPSZ05, DBLP:conf/aisc/Rodosek96, DBLP:journals/algorithmica/Schoning02}. There also exists upper bounds for more general circuits such as ours, however, these work through restricting some parameter of the circuit, often some combination of the size, depth, and type of gates used within it  (see for example~\cite{DBLP:conf/mfcs/Chen15, impagliazzo2012satisfiability, DBLP:conf/focs/ImpagliazzoPS13, DBLP:journals/eccc/SakaiSTT15,DBLP:journals/cc/SetoT13,DBLP:journals/eccc/Tamaki16}).

\subsection{Our Results}
\label{sec:contribution}
Our reduction will create an instance of PMLG (or Subtree Isomorphism) from a given instance of Formula-SAT. In doing so, we make explicit the roles that the size of the circuit $s$ and the number of inputs $n$ play in determining the size of the resulting instance. 

\begin{thm} \label{thm_pmlg}
A Formula-SAT instance of size $s$ on $n$ inputs can be reduced to an instance of PMLG over a binary alphabet with a graph $G=(V, E)$ and pattern $P$ such that $|P| $ is of size $\mathcal{O}(2^{n/2}\cdot s)$ and $ |E|$ is of size $\mathcal{O}(2 ^{n/2} \cdot s^2 )$ in $\mathcal{O}(|E|)$ time, where $G$ is a DAG with  maximum total degree\footnote{Total degree is in-degree plus out-degree.}  three.
\end{thm}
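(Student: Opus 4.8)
I would use the classic split-and-list paradigm. Put $m=\lceil n/2\rceil$, split the inputs as $X=X_1\sqcup X_2$ with $|X_1|,|X_2|\le m$, and list the $2^m$ partial assignments $a_1,\dots,a_{2^m}$ to $X_1$ and $b_1,\dots,b_{2^m}$ to $X_2$; then $F$ is satisfiable iff $F(a_i,b_j)=1$ for some $i,j$. The task is to encode the $a_i$'s into a pattern $P$ and the $b_j$'s into a graph $G$ so that $P$ occurs in $G$ precisely when some such pair exists, while keeping $|P|=\mathcal{O}(2^m s)$, $|E|=\mathcal{O}(2^m s^2)$, the alphabet binary, and the total degree at most three.

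The core object is a \emph{formula gadget} defined by induction on the parse tree of $F$: for each $b$ it yields a small source--sink DAG $\Gamma(F,b)$, and for each $a$ it yields a string $\pi(F,a)$, maintaining the invariant that a source-to-sink path of $\Gamma(F,b)$ spells $\pi(F,a)$ iff $F(a,b)=1$. A leaf with literal $\ell$ becomes a length-three ``slot'' whose middle symbol is, on the pattern side, the truth value of $\ell$ under $a$ when $\mathrm{var}(\ell)\in X_1$ and a fixed $1$ otherwise, and, on the graph side, a fixed $1$ when $\mathrm{var}(\ell)\in X_1$ and the truth value of $\ell$ under $b$ otherwise --- so the slot matches iff the owning half satisfies $\ell$. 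An $\land$-gate is concatenation on both sides. The only subtle case is $\lor$: since a pattern is a fixed string it cannot ``branch,'' so I set $\pi(h_1\lor h_2)=\pi(h_1)\,\pi(h_2)$ and on the graph side use one branching vertex offering two options --- run $\Gamma(h_1)$ for real followed by a \emph{free gadget} that can spell any candidate $\pi(h_2,\cdot)$, or symmetrically --- so exactly one option can carry the pattern and it does so iff the corresponding child evaluates to $1$. Because each $\pi(h_i,\cdot)$ uses one symbol per leaf of $h_i$, these free ``skip'' gadgets cost $\sum_{\lor\text{-gate }g}\#\mathrm{leaves}(g)=\mathcal{O}(s^2)$ edges overall, whereas $|\pi(F,\cdot)|=\mathcal{O}(s)$ since an internal gate adds only $\mathcal{O}(1)$ symbols; this is exactly the source of the asymmetry between $|P|$ and $|E|$.

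To tie the halves together, let $P=\pi(F,a_1)\,\#^R\,\pi(F,a_2)\,\#^R\cdots\#^R\,\pi(F,a_{2^m})$ with $\#$ a fresh symbol and $R=\Theta(m)$ (plus leading and trailing separator blocks), and let $G$ consist of a prefix \emph{padding region}, then a depth-$m$ binary tree branching into $\Gamma(F,b_1),\dots,\Gamma(F,b_{2^m})$, which reconverge through a binary tree into a suffix padding region, with all routing vertices labeled $\#$ so a root-to-leaf traversal spells $\#^R$. The padding regions let the pattern be slid so that any block $\pi(F,a_i)$ is the one aligned with a formula gadget (for instance, built from bypass-able copies of the ``unconditional'' formula gadget, each matching an arbitrary $\pi(F,\cdot)$ together with its trailing $\#^R$). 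Since $\Gamma(F,b_j)$ never spells $\#$ and has spelled-length exactly $|\pi(F,\cdot)|$, the only way it can lie on an occurrence of $P$ is aligned with a full block $\pi(F,a_i)$, and then it fits iff $F(a_i,b_j)=1$; the padding regions absorb every other block. Hence $P$ occurs in $G$ iff $F$ is satisfiable. Counting edges gives $|E|=\mathcal{O}(2^m)+2^m\cdot\mathcal{O}(s^2)+\mathcal{O}(2^m s)=\mathcal{O}(2^{n/2}s^2)$ and $|P|=\mathcal{O}(2^{n/2}s)$, $G$ is a DAG by construction, and everything is built in time $\mathcal{O}(|E|)$.

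Two points need care and are the real obstacles. \emph{Soundness (no spurious occurrences):} one must verify that every source-to-sink path of $\Gamma(F,b_j)$, and every way of threading the padding, forces $F(a_i,b_j)=1$ for the block it lands on --- this is precisely what the fixed spelled-lengths, the $\#$-separators, and the fact that $\#$ never occurs inside a formula gadget buy us (in particular a path cannot skip the disjunctive branch or straddle two blocks). \emph{The structural constraints:} a free gadget spelling all binary strings of a given length, together with the branch, merge, and bypass vertices, naturally have total degree four, so I would give every vertex a label of length $\ge 2$ (adjusting $P$ accordingly) and then split each degree-four vertex $v$ into $v_{\mathrm{in}}\to v_{\mathrm{out}}$ with $v$'s label shared between the two copies, which leaves the set of spelled strings unchanged and brings the total degree down to three. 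Finally, all $\mathcal{O}(1)$ auxiliary symbols are coded by distinct constant-length binary strings, chosen so that no codeword straddles a boundary of two others, making the alphabet binary at only an $\mathcal{O}(1)$ cost.
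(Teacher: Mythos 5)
Your proposal is essentially correct, and its heart --- the per-gate construction of a pattern depending only on the $a$-half and a source--sink DAG depending only on the $b$-half, with AND realized by concatenation and OR by a branch in which one child is verified while the other is absorbed by a ``universal'' subgraph of the right spelled length --- is exactly the paper's construction, including the size accounting ($\mathcal{O}(s)$ per pattern, $\mathcal{O}(s^2)$ per graph because each OR contributes a universal gadget of pattern length). You differ in two places. First, you apply split-and-list directly to $F$ instead of routing through the Formula-Pair problem of Abboud--Bringmann; this is fine (your leaf gadget evaluates each literal under the relevant half-assignment, so you do not even need the paper's ``no negations, each variable once'' normalization), and Formula-Pair is really just a packaged form of the same step. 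Second, your final assembly (blocks separated by $\#^{\Theta(\log N)}$, a depth-$\log N$ branching/reconverging tree of $\#$-labeled routing vertices selecting one $\Gamma(F,b_j)$, and prefix/suffix padding made of universal copies) replaces the paper's Equi-et-al-style three-row graph with $\$$-separators and a $\$\$$ prefix/suffix that forces the occurrence through the middle row; both selectors do the same job, and your soundness argument (fixed spelled lengths, $\#$-free formula gadgets, hence a $\Gamma$ can only align with a full block, and padding of size $N-1$ per side cannot host all $N$ blocks) is the right one. One bookkeeping point you should pin down: the $\#$'s spelled by the selector tree must be charged against the single separator between the absorbed blocks and the verified block (e.g., by entering the tree from inside a padding copy's separator run after $R-\log N$ symbols), otherwise the intended occurrence over- or under-counts $\#$'s; this is a fixable alignment detail of the same kind the paper's row construction also has to arrange. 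Finally, you explicitly handle the binary-alphabet encoding and the degree-three reduction via label lengthening and vertex splitting; the paper's body actually builds over $\{0,1,\$\}$ with some degree-four vertices and defers these refinements, so your treatment is, if anything, more complete on that point.
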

Similarly, for Subtree Isomorphism we have the following theorem.
\begin{thm} \label{thm_sub_tree}
A Formula-SAT instance of size $s$ on $n$ inputs can be reduced to an instance of Subtree Isomorphism on two binary trees $T_1$ and $T_2$, where the size of $T_1$ is $\mathcal{O}(2^{n/2} \cdot s)$, and the size of $T_2$ is $\mathcal{O}(2^{n/2} \cdot s^2)$ in $\mathcal{O}(|T_2|)$ time.
\end{thm}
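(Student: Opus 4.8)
The plan is to carry out, for Subtree Isomorphism, a construction that parallels step for step the reduction behind Theorem~\ref{thm_pmlg}, replacing the labeled DAG and its path‑spellings by a rooted binary tree and its subtrees. As in essentially every reduction from Formula‑SAT, I would begin with a split‑and‑list: partition the $n$ inputs into two halves $Y$ and $Z$ of $n/2$ variables each, and enumerate all $2^{n/2}$ assignments $\alpha$ to $Y$ and all $2^{n/2}$ assignments $\beta$ to $Z$. The text tree $T_2$ will carry one \emph{text formula gadget} $T(F,\alpha)$ for each $\alpha$, with the bits of $\alpha$ hard‑coded into it; the pattern tree $T_1$ will carry one \emph{pattern formula gadget} $P(F,\beta)$ for each $\beta$, with the bits of $\beta$ hard‑coded; and a top‑level \emph{selection gadget} will combine these so that a subtree embedding $T_1\hookrightarrow T_2$ exists if and only if, for some pair $(\alpha,\beta)$, the gadget $P(F,\beta)$ embeds into $T(F,\alpha)$. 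The construction is then correct provided we arrange that $P(F,\beta)\hookrightarrow T(F,\alpha)$ holds exactly when $F(\alpha,\beta)=1$.

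The core of the argument is a recursive definition of the two families of gadgets, built bottom‑up along the formula tree of $F$. At a leaf that is a literal over $Y$ the value is fixed by $\alpha$, so the text gadget is a fixed ``accepting'' tree or a fixed ``rejecting'' tree and the pattern gadget is a fixed ``probe'' tree that embeds into the accepting one but not the rejecting one; at a leaf over $Z$ the roles are swapped, the value being hard‑coded on the pattern side; consistency across the many occurrences of a variable is then automatic, since every occurrence reads the same hard‑coded bit. At an $\land$‑gate $g=g_1\land g_2$ both $P(g,\cdot)$ and $T(g,\cdot)$ are a root with two children carrying the gadgets of $g_1$ and $g_2$, equipped with asymmetric ``separator'' subtrees forcing the embedding to pair $P(g_i,\cdot)$ with $T(g_i,\cdot)$, so that the gadget embeds iff both children do. At an $\lor$‑gate $g=g_1\lor g_2$ the pattern gadget is again a root with the two children carrying $P(g_1,\cdot)$ and $P(g_2,\cdot)$, while the text gadget offers, alongside $T(g_1,\cdot)$ and $T(g_2,\cdot)$, a ``wildcard'' subtree large enough to absorb whichever of $P(g_1,\cdot),P(g_2,\cdot)$ is not used, so that the gadget embeds iff $P(g_1,\cdot)\hookrightarrow T(g_1,\cdot)$ or $P(g_2,\cdot)\hookrightarrow T(g_2,\cdot)$. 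Composing the gadgets up to the root reproduces the evaluation of $F$, and feeding the root gadget into the selection gadget yields $T_1$ and $T_2$; any node of degree exceeding two is expanded into a short binary ``comb'' at only a constant multiplicative cost.

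For the size bounds, the pattern gadget satisfies $|P(g,\cdot)|=|P(g_1,\cdot)|+|P(g_2,\cdot)|+\mathcal{O}(1)$ at every gate, hence $|P(F,\beta)|=\mathcal{O}(s)$ as $F$ has $s$ leaves; together with the $2^{n/2}$ listed assignments and the $\mathcal{O}(2^{n/2}s)$ padding inside the selection gadget this gives $|T_1|=\mathcal{O}(2^{n/2}s)$. The text gadget satisfies $|T(g,\cdot)|=|T(g_1,\cdot)|+|T(g_2,\cdot)|+\mathcal{O}(1)$ at $\land$‑gates but $|T(g,\cdot)|=|T(g_1,\cdot)|+|T(g_2,\cdot)|+\mathcal{O}(|P(g_1,\cdot)|+|P(g_2,\cdot)|)$ at $\lor$‑gates because of the wildcard; since $|P(g_i,\cdot)|$ is proportional to the number of leaves below $g_i$, these extra terms sum to $\mathcal{O}(s\cdot\mathrm{depth}(F))=\mathcal{O}(s^2)$ in the worst case, so $|T(F,\alpha)|=\mathcal{O}(s^2)$ and $|T_2|=\mathcal{O}(2^{n/2}s^2)$. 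Both trees are binary, and each gadget is produced by a single pass over the formula while the selection gadget and its padding are written down in linear time, so the whole instance is built in $\mathcal{O}(|T_2|)$ time.

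The step I expect to be the main obstacle is proving that the gadgets are \emph{faithful}: that no unintended embedding exists. One must rule out ``cross‑embeddings'' --- $P(g_1,\cdot)$ embedding into $T(g_2,\cdot)$ or vice versa, which would let an $\land$‑gate or $\lor$‑gate pass for the wrong reason --- and one must ensure the $\lor$‑gadget's wildcard is not so permissive that it masks a false subformula. Both are handled by assigning the separators and wildcards distinctive sizes and shapes (``tags'') chosen so that the only size‑ and shape‑compatible embeddings are the intended ones, and this must be checked simultaneously at every level of the recursion. The selection gadget is equally delicate: it must force the $2^{n/2}$ pattern gadgets to be parked in the text in such a way that exactly one pair $(\alpha,\beta)$ is actually tested --- never an alignment that mixes different copies, and never all copies at once. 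These are precisely the points that demand care in the PMLG construction of Section~\ref{sec:PMLG}, and the close correspondence between the two arguments is what Section~\ref{sec:discussion} is about.
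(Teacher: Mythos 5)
Your plan is the same as the paper's: per-gate pattern/text tree gadgets with hard-coded half-assignments (the paper routes through Formula-Pair, you split the variables directly, which is an immaterial difference), an AND gadget that juxtaposes the two children with separators, an OR gadget whose text side offers a wildcard absorbing the unused pattern child, and a top-level selection gadget; your size accounting also matches the theorem. However, the proposal has genuine gaps at exactly the points you defer. First, the wildcard: ``large enough to absorb'' is not meaningful for subtree isomorphism, which is about shape rather than size; you need a single tree, built without knowledge of the pattern-side assignment, that contains $P(g_i,\beta)$ for \emph{every} $\beta$ simultaneously. The paper's device is to make the two possible input-gate pattern gadgets nested (the gadget for bit value $0$ contains the gadget for bit value $1$, Figure~\ref{fig:subtree_input}), so that the pattern tree for the all-zeros assignment $u=0^m$ is universal, and the OR wildcard $U_g$ is the merge at the root of $T_u^1$ and $T_u^2$. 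Without some such monotonicity built into the leaf gadgets, your wildcard has no construction.

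Second, faithfulness is not something to be ``checked at every level'' after the fact: the paper makes it provable by maintaining an explicit invariant that the pattern and text gadgets of a gate have exactly equal height (at most $4h$ for a height-$h$ gate), which forces every embedding to map gadget root to gadget root, and by attaching asymmetric tag paths (of lengths one and two) so that a cross-mapping of $T_a^1$ into $T_b^2$ strands a length-two path with nowhere to go (Figures~\ref{fig:subtree_AND} and~\ref{fig:OR_trees}); one also checks that only one pattern child of an OR gadget can reach the wildcard. You state no height (or other) invariant, so ``distinctive sizes and shapes'' is not yet an argument. Third, the selection gadget is only asserted: the paper's construction (two nested complete binary trees with length-$x$ paths, $2^x-1$ copies of the universal tree $U=T_u$ on the text side, and padding copies of $T_{a_N}$ on the pattern side, Figure~\ref{fig:final}) is what guarantees that exactly one pair $(a,b)$ is tested while all other pattern gadgets are parked in universal trees --- and it again relies on the universal tree you have not constructed. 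So the skeleton is right and matches the paper, but the three constructions you flag as obstacles are precisely where the proof lives, and none of them is supplied.
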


Combining Theorems \ref{thm_pmlg} and \ref{thm_sub_tree} with observations made by Abboud \textit{et al.} in \cite{DBLP:conf/stoc/AbboudHWW16} (and restated in Appendix \ref{app:circuit}), we obtain the following `breakthrough' implications of a strongly subquadratic time algorithm for PMLG or Subtree Isomorphism. Proofs are deferred to Appendix \ref{app:circuit}.
\begin{cor}
\label{cor:strict_subquad_cons}
The existence of a strongly subquadratic time algorithm for PMLG (or Subtree Isomorphism) would imply the class $\E^\NP$ (1) does not have non-uniform $2^{o(n)}$-size Boolean formulas and (2) does not have non-uniform $o(n)$-depth circuits of bounded fan-in. It also implies that $\NTIME[2^{\mathcal{O}(n)}]$ is not in non-uniform $\NC$.
\end{cor}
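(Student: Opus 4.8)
The plan is to chain the two reductions with the known connection between Formula-SAT speedups and circuit lower bounds. A strongly subquadratic algorithm for PMLG (resp.\ Subtree Isomorphism), run on the instances produced by Theorem~\ref{thm_pmlg} (resp.\ Theorem~\ref{thm_sub_tree}), yields a Formula-SAT algorithm that beats exhaustive search by an exponential factor on every sub-exponential-size family of formulas; feeding this into the observations of Abboud, Hansen, Vassilevska W., and Williams~\cite{DBLP:conf/stoc/AbboudHWW16}, restated in Appendix~\ref{app:circuit}, then produces the three stated consequences.

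Quantitatively, fix $\varepsilon > 0$ and suppose PMLG is solvable in time $O(N^{2-\varepsilon})$, where $N$ is the total size of the input (and similarly for Subtree Isomorphism). Given a Formula-SAT instance of size $s$ on $n$ inputs, Theorem~\ref{thm_pmlg} constructs in time $O(2^{n/2}s^2)$ an equivalent PMLG instance of total size $N = O(2^{n/2}s^2)$, and the assumed algorithm then decides satisfiability in time $O\left((2^{n/2}s^2)^{2-\varepsilon}\right) = O\left(2^{n(1-\varepsilon/2)}\cdot s^{4-2\varepsilon}\right)$; the computation via Theorem~\ref{thm_sub_tree} is identical. The point to verify is that the polynomial-in-$s$ overhead $s^{4-2\varepsilon}$, together with the $O(2^{n/2}s^2)$ construction time, is dominated by the exponential savings $2^{-n\varepsilon/2}$ precisely when $s = 2^{o(n)}$: in that regime the running time is $2^{n(1-\varepsilon/2+o(1))}$, beating the $2^n\cdot\mathrm{poly}(s)$ brute force by a $2^{\Omega(n)}$ factor, and this holds \emph{uniformly} over all sub-exponential size functions with a single $\varepsilon$.

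A Formula-SAT algorithm of this quality is exactly what the lemmas of Appendix~\ref{app:circuit} require. A super-polynomial — here $2^{\Omega(n)}$ — speedup over brute force for satisfiability of $2^{o(n)}$-size de Morgan formulas suffices, via the ($\E^\NP$-flavored, easy-witness-style) argument recalled there, to conclude that $\E^\NP$ has no non-uniform $2^{o(n)}$-size Boolean formulas, giving item~(1). Item~(2) is then immediate, since a bounded-fan-in circuit of depth $d$ unrolls into a de Morgan formula of size $2^{O(d)}$, so $o(n)$-depth bounded-fan-in circuits form a sub-class of $2^{o(n)}$-size formulas. For the last claim, a non-uniform $\NC$ circuit has polynomial size and polylogarithmic depth, hence unrolls to a formula of quasi-polynomial — in particular $2^{o(n)}$ — size; thus the Formula-SAT algorithm above is in particular a non-trivial SAT algorithm for $\NC$ circuits, and Williams' algorithmic method then yields $\NTIME[2^{\mathcal{O}(n)}] \not\subseteq \NC$.

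There is no deep obstacle here: the corollary is a black-box combination of Theorems~\ref{thm_pmlg} and~\ref{thm_sub_tree} with the restated results of Appendix~\ref{app:circuit} and a one-line circuit-unrolling bound. The step most prone to error is the quantitative matching in the second paragraph — confirming that the hypothesis in Appendix~\ref{app:circuit} only asks for a speedup by a factor super-polynomial in the formula size $s$, which our $2^{\Omega(n)}$ factor supplies for every $s = 2^{o(n)}$ with one fixed $\varepsilon$, rather than for a single particular growth rate of $s$. Once $n$, $s$, and $\varepsilon$ are lined up with the statement there, the three consequences follow with no further work.
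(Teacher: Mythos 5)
Your proposal is correct and follows essentially the same route as the paper: use Theorem~\ref{thm_pmlg} (resp.\ Theorem~\ref{thm_sub_tree}) to turn a strongly subquadratic PMLG (resp.\ Subtree Isomorphism) algorithm into a Formula-SAT algorithm running in $2^{n(1-\Omega(\varepsilon))+o(n)}$ time on $2^{o(n)}$-size formulas, then feed this into the results of \cite{DBLP:conf/stoc/AbboudHWW16} restated in Appendix~\ref{app:circuit}; items (1) and (2) are handled exactly as in the paper, including the unrolling of $o(n)$-depth bounded fan-in circuits into $2^{o(n)}$-size formulas. The one place you deviate is the final claim about $\NTIME[2^{\mathcal{O}(n)}]$ and $\NC$: the paper verifies the explicit hypothesis of Theorem~\ref{thm:circuit_2} (a SAT algorithm for bounded fan-in formulas of size $n^k$ running in $\mathcal{O}(2^n/n^k)$ for every constant $k$, which your time bound clearly provides), whereas you instead unroll $\NC$ circuits to quasi-polynomial-size formulas and invoke ``Williams' algorithmic method'' generically for $\NC$; that version of the statement is not among the results restated in the appendix, so strictly speaking you are appealing to a lemma the paper does not supply, although the quantitative bound you computed already satisfies the hypothesis of Theorem~\ref{thm:circuit_2}, so nothing is actually broken. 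Two further minor points the paper makes explicit and you gloss over: the hypothesis of Theorem~\ref{thm:circuit_1} is stated for circuits that are ANDs of $\mathcal{O}(S(n))$ arbitrary functions of three $\mathcal{O}(S(n))$-size circuits, and one sentence is needed to note a formula-SAT solver trivially covers this case; and the Word-RAM word-size conversion contributes an $n^{1+o(1)}$ factor, which is harmless here but worth recording since the hardness framework of \cite{DBLP:conf/icalp/AbboudB18} insists on it.
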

The second corollary gives the consequences of being able to shave arbitrarily many logarithmic factors from the quadratic time complexity.
\begin{cor}
\label{cor:infinite_logshaving_cons}
If PMLG (or Subtree Isomorphism) can be solved in time $\mathcal{O}(\frac{|E||P|}{\log^c |E|})$ or $\mathcal{O}(\frac{|E||P|}{\log^c |P|})$ ( $\mathcal{O}(\frac{|T_1||T_2|}{\log^c |T_1|})$ or $\mathcal{O}(\frac{|T_1||T_2|}{\log^c |T_2|})$ resp.) for all $c = \Theta(1)$, then $\NTIME[2^{O(n)}]$ does not have non-uniform polynomial-size log-depth circuits.
\end{cor}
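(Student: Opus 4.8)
The plan is to compose the reduction of Theorem~\ref{thm_pmlg} with the hypothesized PMLG algorithm to get a nontrivially fast algorithm for Formula-SAT on polynomial-size formulas, and then to feed that algorithm into the algorithm-to-lower-bound framework of Abboud et al.~\cite{DBLP:conf/stoc/AbboudHWW16} that is restated in Appendix~\ref{app:circuit}; the same two steps handle Subtree Isomorphism through Theorem~\ref{thm_sub_tree}.

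First I would fix a constant $k$ and restrict to Formula-SAT on deMorgan formulas of size $s\le n^k$ over $n$ variables, whose brute-force running time is $O(2^n\cdot s)$. By Theorem~\ref{thm_pmlg}, in time $O(|E|)=O(2^{n/2}s^2)$ one builds an equivalent PMLG instance with $|P|=O(2^{n/2}s)$ and $|E|=O(2^{n/2}s^2)$, so that $|E|\cdot|P|=O(2^n s^3)$ while $\log|E|=\Theta(\log|P|)=\Theta(n+\log s)=\Theta(n)$, the last step using $s\le n^k$. Running a PMLG algorithm of the hypothesized form with parameter $c$ on this instance, and prepending the reduction time, decides the original Formula-SAT instance in time
\[
O\!\left(2^{n/2}s^2\right)\;+\;O\!\left(\frac{2^n s^3}{(n+\log s)^{c}}\right)\;=\;O\!\left(\frac{2^n}{n^{\,c-3k}}\right),
\]
the reduction term being lower order for $n$ large, and the same bound holding if the speedup is measured against $\log^c|P|$ instead, since $\log|P|=\Theta(\log|E|)$ here. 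Since the hypothesis grants this for every constant $c$, taking $c=3k+d$ gives: for every pair of constants $k,d$, Formula-SAT on formulas of size at most $n^k$ over $n$ variables is solvable in time $O(2^n/n^{d})$ --- polynomial savings of unbounded degree over brute force, simultaneously across all polynomial formula sizes. It is precisely this ``all polynomial sizes at once'' feature that forces the hypothesis to quantify over all $c$: the reduction inflates the formula size $s$ into an edge count $\Theta(s^2)$, so any fixed number of shaved logs buys savings only up to a fixed polynomial formula size.

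Next I would invoke the framework of Abboud et al.~\cite{DBLP:conf/stoc/AbboudHWW16} (restated in Appendix~\ref{app:circuit}), which turns Formula-SAT speedups of this kind into the stated lower bound. Roughly: if $\NTIME[2^{O(n)}]$ had non-uniform polynomial-size log-depth circuits (equivalently, polynomial-size formulas), then, by an easy-witness argument, a language hard for nondeterministic time $2^n$ --- one exists by the nondeterministic time hierarchy --- would have witnesses computable by polynomial-size formulas, and a sufficiently fast Formula-SAT algorithm would let one nondeterministically guess and verify such a witness in time $o(2^n)$, contradicting the hierarchy. Plugging in the algorithm from the previous step completes the argument for PMLG. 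For Subtree Isomorphism the reasoning is identical, with Theorem~\ref{thm_sub_tree} in place of Theorem~\ref{thm_pmlg} and $|T_1|=O(2^{n/2}s)$, $|T_2|=O(2^{n/2}s^2)$, $\log|T_1|=\Theta(\log|T_2|)=\Theta(n)$ taking the roles of $|P|$, $|E|$, $\log|E|$.

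I expect no conceptual obstacle: the only real work is bookkeeping --- checking that the savings harvested from the assumed pattern-matching (resp.\ subtree) algorithm arrive in exactly the shape the Appendix~\ref{app:circuit} statement requires, namely measured against the $2^n\cdot\mathrm{poly}(s)$ baseline, holding for every polynomial formula size, and with savings exponent ($c-3k$ above) tending to $\infty$ with $c$ so that it clears whatever fixed threshold that statement imposes. The confinement of the argument to polynomial-size formulas is harmless, since that is exactly the regime in which the shaved factor $\log^c|E|=\Theta(n^c)$ outruns the $\Theta(s^2)$ blow-up in the instance size, and it is also the regime in which the Abboud et al.\ connection lives.
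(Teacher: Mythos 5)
Your proposal is correct and follows essentially the same route as the paper: reduce Formula-SAT on size-$n^k$ formulas through Theorem \ref{thm_pmlg} (resp.\ Theorem \ref{thm_sub_tree}), choose $c$ large enough relative to $k$ so the hypothesized algorithm yields an $\mathcal{O}(2^n/n^k)$-time satisfiability algorithm, and invoke Theorem \ref{thm:circuit_2} restated in Appendix \ref{app:circuit}. The only difference is cosmetic: the paper additionally tracks the $n^{1+o(1)}$ word-size simulation factor (hence its choice $c>4k+1$), which your quantification over all constants $c$ absorbs just as easily.
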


In fact, we can give a particular constant $c$ for which shaving a $\log^c n$ factor would yield surprising new results in complexity theory. The following log-sensitive lower bounds leave a huge gap from the best known upper bounds; we present these corollaries purely for instructive purposes.

\bigskip
\noindent
\textbf{Hardness of Shaving Log Factors.}
We work under the Word-RAM model and limit the set of constant-time primitive operations to those operations which are robust to change in word size. Specifically, suppose we are given a word size of $w = \Theta(\log n)$ and an operation that can be performed in $\mathcal{O}(1)$ time. We stipulate that we must be able to simulate this operation on words of size $W = \Theta(2^w)$ in time $n^{1 + o(1)}$. This is a reasonable assumption that is satisfied by many constant time operations such as addition, subtraction, multiplication, and division with remainder. See \cite{DBLP:conf/icalp/AbboudB18} for a detailed discussion.

The following hypothesis was suggested by Abboud and Bringmann in \cite{DBLP:conf/icalp/AbboudB18}. It reflects the fact that the best known algorithmic solutions to Formula-SAT\footnote{As observed by Williams in \cite{williams2014algorithms}, for deMorgan formulas of size $n^{3-o(1)}$ there exists a randomized $2^{n-n^{\Omega(1)}}$ time, zero error algorithm which can be obtained by applying results from \cite{DBLP:journals/cc/ChenKKSZ15} and \cite{DBLP:conf/focs/KomargodskiRT13}.} 
fail to provide a time complexity better than the na\"ive solution on formulas of size $s = n ^{3 + \Omega(1)}$.
\begin{hyp}[\cite{DBLP:conf/icalp/AbboudB18}]
\label{hyp1}
There is no algorithm that can solve SAT on deMorgan formulas of size $s = n ^{3 + \Omega(1)}$ in $\mathcal{O}(\frac{2^n}{n^{\varepsilon}})$ time for some $\varepsilon > 0$ in the Word-RAM model.
\end{hyp}
\begin{cor} \label{cor:pmlg}
Hypothesis \ref{hyp1} is false if PMLG (respectively Subtree Isomorphism) can be solved in time $\mathcal{O}\left(\frac{|E||P|}{\log ^ {10 + \varepsilon} |E|}\right)$ or $\mathcal{O}\left(\frac{|E||P|}{\log ^ {10 + \varepsilon} |P|}\right)$, (respectively $\mathcal{O}\left(\frac{|T_1||T_2|}{\log ^ {10 + \varepsilon} |T_1|}\right)$ or $\mathcal{O}\left(\frac{|T_1||T_2|}{\log ^ {10 + \varepsilon} |T_2|}\right)$) for any $\varepsilon > 0$.
\end{cor}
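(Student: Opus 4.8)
The plan is to compose the reduction of Theorem~\ref{thm_pmlg} (respectively Theorem~\ref{thm_sub_tree}) with the hypothesized fast algorithm and read off a Formula-SAT algorithm fast enough to contradict Hypothesis~\ref{hyp1}. Suppose PMLG can be solved in time $\mathcal{O}(|E||P|/\log^{10+\varepsilon}|E|)$ for some $\varepsilon > 0$. I would fix a constant $\delta$ with $0 < \delta < 1/3$ and take an arbitrary Formula-SAT instance given by a deMorgan formula of size $s = n^{3+\delta}$ on $n$ inputs --- this is precisely the regime $s = n^{3+\Omega(1)}$ of Hypothesis~\ref{hyp1}. Applying Theorem~\ref{thm_pmlg} produces, in $\mathcal{O}(|E|)$ time, an equivalent PMLG instance with $|P| = \mathcal{O}(2^{n/2} s) = \mathcal{O}(2^{n/2} n^{3+\delta})$ and $|E| = \mathcal{O}(2^{n/2} s^2) = \mathcal{O}(2^{n/2} n^{6+2\delta})$; then I run the assumed algorithm on it.

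Next I would bound the resulting running time. We have $|E||P| = \mathcal{O}(2^{n} n^{9+3\delta})$, and since $s$ is polynomial in $n$ the number of edges satisfies $\log|E| = n/2 + \mathcal{O}(\log n) = \Theta(n)$, so $\log^{10+\varepsilon}|E| = \Theta(n^{10+\varepsilon})$. Hence the PMLG call runs in time $\mathcal{O}\!\left(2^n n^{9+3\delta}/n^{10+\varepsilon}\right) = \mathcal{O}\!\left(2^n / n^{1+\varepsilon-3\delta}\right)$, and $1+\varepsilon-3\delta > \varepsilon > 0$ by the choice of $\delta$. Adding the $\mathcal{O}(|E|) = 2^{n/2}\,\mathrm{poly}(n)$ reduction cost, which is $o(2^n/n^{\varepsilon})$, the whole procedure decides the Formula-SAT instance in $\mathcal{O}(2^n/n^{\varepsilon})$ time, contradicting Hypothesis~\ref{hyp1}. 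The variant with $\log^{10+\varepsilon}|P|$ in the denominator is identical because $\log|P| = \Theta(n)$ as well, and for Subtree Isomorphism I would repeat the argument verbatim using Theorem~\ref{thm_sub_tree}, with $|T_1|$ and $|T_2|$ in the roles of $|P|$ and $|E|$.

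It is instructive to see why the exponent is exactly $10$: the product $|E||P|$ picks up a factor $s^{3}\approx n^{9}$ from the two polynomial blow-ups (the $s^2$ in the graph/tree size and the $s$ in the pattern/small tree), and one further power of $n$ --- together with the $\varepsilon$, up to the slack $3\delta < 1$ --- is what is needed to push the running time below the trivial $2^n$ bound; this is what forces $\log^{10+\varepsilon}$. I do not anticipate a real obstacle in this step: essentially all the work is already packed into Theorems~\ref{thm_pmlg} and~\ref{thm_sub_tree}, and what remains is careful bookkeeping of polynomial factors plus one model-of-computation check, namely that the composed algorithm lives in the restricted Word-RAM model underlying Hypothesis~\ref{hyp1}. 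That check is routine: the reduction only builds an $\mathcal{O}(|E|)$-size structure using standard arithmetic and array indexing, and the hypothesized PMLG/Subtree-Isomorphism algorithm is assumed to use only primitive operations robust to word-size change, so their composition inherits the same property.
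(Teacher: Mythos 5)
Your overall strategy (compose Theorem~\ref{thm_pmlg} with the assumed fast algorithm, pick $s=n^{3+\delta}$, bound the time, contradict Hypothesis~\ref{hyp1}) is the paper's strategy, but there is a genuine gap in the time accounting. Hypothesis~\ref{hyp1} is stated in the restricted Word-RAM model whose word size is tied to the Formula-SAT input, i.e.\ $w=\Theta(\log n)$, whereas the hypothesized PMLG/Subtree-Isomorphism algorithm runs on an instance of size $2^{\Theta(n)}$ and is therefore charged unit cost for operations on words of size $\Theta(\log|E|)=\Theta(n)$. The robustness assumption does not make this discrepancy disappear, as you claim in your final paragraph; it only guarantees that each large-word operation can be simulated at a cost of $n^{1+o(1)}$ on the small-word machine. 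The paper's proof explicitly multiplies the PMLG running time by this $n^{1+o(1)}$ factor, obtaining $\mathcal{O}\bigl(2^n s^3/n^{9+\varepsilon}\bigr)$ rather than your $\mathcal{O}\bigl(2^n s^3/n^{10+\varepsilon}\bigr)$. With the factor restored, your bound $\mathcal{O}\bigl(2^n/n^{1+\varepsilon-3\delta}\bigr)$ degrades to $\mathcal{O}\bigl(2^n/n^{\varepsilon-3\delta-o(1)}\bigr)$, which yields no contradiction whenever $\delta\ge\varepsilon/3$; since you fixed $\delta$ as an arbitrary constant below $1/3$, independent of $\varepsilon$, the argument fails for small $\varepsilon$. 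The repair is exactly what the paper does: tie $\delta$ to $\varepsilon$, e.g.\ $s=n^{3+\varepsilon/6}$, giving Formula-SAT time $\mathcal{O}(2^n/n^{\varepsilon/2})$.

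Relatedly, your "instructive" explanation of the exponent $10$ is off: one does not need an extra power of $n$ merely to "push below $2^n$" (any saving $n^{\varepsilon'}$ with $\varepsilon'>0$ falsifies the hypothesis). If no word-size conversion were needed, $\log^{9+\varepsilon}$ in the denominator would already suffice (with $\delta<\varepsilon/3$); the tenth logarithmic factor is there precisely to pay for the $n^{1+o(1)}$ simulation overhead you omitted. The rest of your write-up (sizes $|P|=\mathcal{O}(2^{n/2}s)$, $|E|=\mathcal{O}(2^{n/2}s^2)$, $\log|P|=\Theta(\log|E|)=\Theta(n)$, negligible $\mathcal{O}(|E|)$ reduction cost, and the verbatim transfer to Subtree Isomorphism via Theorem~\ref{thm_sub_tree}) matches the paper and is fine.
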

\begin{proof}
We show the proof for PMLG; the proof for Subtree Isomorphism is identical. By Theorem \ref{thm_pmlg}, an  $\mathcal{O}(\frac{|E||P|}{\log ^ {10 + \varepsilon} |E|})$ algorithm for PMLG can be converted to yield an algorithm running in 
$n^{1+o(1)} \cdot \frac{(2^{n/2} \cdot s^2)(2^{n/2}s) }{ \log ^ {10 + \varepsilon} (2^{n / 2} \cdot s^2)} = \mathcal{O}\left(\frac{2^n \cdot s^3}{ n ^ {9 + \varepsilon}}\right)$ time
for Formula-SAT (note the $n^{1 + o(1)}$ factor introduced when moving from a word size of $\Theta(\log n)$ to $\Theta(n)$). If we choose $s = n ^ {3 + \varepsilon/6}$ 
then this yields an algorithm for Formula-SAT of time $\mathcal{O}(\frac{2^n}{n^{\varepsilon/2}})$, and Hypothesis \ref{hyp1} is false.
\end{proof}

Again thanks to results highlighted by Abboud \textit{et al.} in \cite{DBLP:conf/stoc/AbboudHWW16}, we can also say the following about shaving a constant number of logarithmic factors from the quadratic time complexity. The proof is deferred to Appendix \ref{app:circuit}.
\begin{cor} \label{corr:E_NP_log_shaving_constant}
$\E^\NP$ cannot be computed by non-uniform formulas of cubic size if PMLG (respectively Subtree Isomorphism) can be solved in time
$\mathcal{O}\left(\frac{|E||P|}{\log ^ {20 + \varepsilon} |E|}\right)$ or $\mathcal{O}\left(\frac{|E||P|}{\log ^ {20 + \varepsilon} |P|}\right)$ (respectively $\mathcal{O}\left(\frac{|T_1||T_2|}{\log ^ {20 + \varepsilon} |T_1|}\right)$ or $\mathcal{O}\left(\frac{|T_1||T_2|}{\log ^ {20 + \varepsilon} |T_2|}\right)$) for any $\varepsilon > 0$.
\end{cor}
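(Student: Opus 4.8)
The plan is to follow the same two-step template as the proof of Corollary~\ref{cor:pmlg}: first convert the hypothesized graph (or tree) algorithm into a Formula-SAT algorithm via Theorem~\ref{thm_pmlg} (resp.\ Theorem~\ref{thm_sub_tree}), and then feed that Formula-SAT algorithm into the circuit-complexity consequence of Abboud \textit{et al.}~\cite{DBLP:conf/stoc/AbboudHWW16} recorded in Appendix~\ref{app:circuit}. I would write the argument for PMLG with the $\log|E|$ version of the bound; the $\log|P|$ version and both Subtree Isomorphism versions are verbatim identical, substituting Theorem~\ref{thm_sub_tree} and $|T_1|,|T_2|$ for $|P|,|E|$, since $\log|P|$, $\log|E|$, $\log|T_1|$ and $\log|T_2|$ are all $\Theta(n)$ when the formula size is $\mathrm{poly}(n)$.

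So suppose PMLG can be solved in time $\mathcal{O}(|E||P|/\log^{20+\varepsilon}|E|)$ for some $\varepsilon>0$. Given a Formula-SAT instance of size $s=\mathrm{poly}(n)$ on $n$ inputs, Theorem~\ref{thm_pmlg} produces in $\mathcal{O}(|E|)$ time an equivalent PMLG instance with $|P|=\mathcal{O}(2^{n/2}s)$ and $|E|=\mathcal{O}(2^{n/2}s^2)$; running the hypothesized algorithm on it, and paying the $n^{1+o(1)}$ cost of simulating the word-size blow-up exactly as in Corollary~\ref{cor:pmlg}, solves Formula-SAT in time
\[
n^{1+o(1)}\cdot\frac{(2^{n/2}s^2)(2^{n/2}s)}{\log^{20+\varepsilon}(2^{n/2}s^2)}=\mathcal{O}\!\left(\frac{2^n\,s^3}{n^{19+\varepsilon}}\right),
\]
using $\log(2^{n/2}s^2)=\Theta(n)$. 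Taking $s=\Theta(n^3)$, the cubic regime, this becomes $\mathcal{O}(2^n/n^{10+\varepsilon})$, i.e.\ a polynomial-factor saving over the $2^n$ brute-force search on deMorgan formulas of cubic size. The final step is to quote the bridging statement from Appendix~\ref{app:circuit}: a Formula-SAT algorithm achieving this much saving on cubic-size formulas implies that $\E^\NP$ is not computed by any non-uniform deMorgan formula of cubic size. Chaining the two implications yields the corollary.

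The only real content is a constant-chasing check: one must verify that after the quadratic reduction blow-up ($|E|\cdot|P|=\Theta(2^n s^3)$) and the $n^{1+o(1)}$ word-RAM simulation cost, shaving exactly $20+\varepsilon$ logarithmic factors leaves a polynomial saving at least as large as the threshold demanded by the bridge of Appendix~\ref{app:circuit}. This is precisely why the exponent here ($20$) must exceed the exponent ($10$) of Corollary~\ref{cor:pmlg}, whose target (refuting Hypothesis~\ref{hyp1}) only asked for \emph{some} polynomial saving on strictly super-cubic formulas and hence tolerated a weaker reduction. I expect this calibration of constants, rather than any genuinely new idea, to be the main — and essentially the only — obstacle; everything else is a transcription of the proof of Corollary~\ref{cor:pmlg}.
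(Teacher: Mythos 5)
Your proposal is correct and follows essentially the same route as the paper, whose proof simply quotes the Abboud et al.\ bridge (Theorem~\ref{thm:circuit_1}, restated for formulas of size $\mathcal{O}(n^{3+\varepsilon})$) and then performs the same computation as Corollary~\ref{cor:pmlg} with $20+\varepsilon$ logarithmic factors in place of $10+\varepsilon$. The only cosmetic difference is your calibration at exactly $s=\Theta(n^3)$ yielding $\mathcal{O}(2^n/n^{10+\varepsilon})$, whereas the paper instantiates the bridge at slightly supercubic size $s=\mathcal{O}(n^{3+\varepsilon})$ to hit the $\mathcal{O}(2^n/n^{10})$ threshold; both clear the required bound and give the stated conclusion about cubic-size formulas for $\E^\NP$.
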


The same hardness results for PMLG apply for several more specific types of graphs (details will be presented in the full version of this paper). These include when the graph $G$ is a deterministic DAG (at most one edge leaves a vertex with the same leading character on an edge label) of total degree at most 3, and the case when $G$ is a directed or undirected planar graph of degree at most $3$.

\section{Reduction from Formula-SAT to PMLG}
\label{sec:PMLG}
\subsection{Technical Overview}
Our reduction from Formula-SAT to PMLG uses an intermediate problem called Formula-Pair.
\begin{defn}[Formula-Pair] Given a deMorgan Formula $F = F(x_1, \dots, x_m, y_1, \dots, y_m)$ of size $2m$ where each input is used exactly once, and two sets $A, B \subseteq \{0,1\}^m$ each of size $N$, does there exist $a \in A$ and $b \in B$ such that $F(a,b) = F(a_1, \dots, a_m, b_1, \dots, b_m) = 1$?
\end{defn}

The role Formula-Pair plays in our reduction is analogous to the role of the Orthogonal Vectors Problem in many SETH reductions. It was proven in \cite{DBLP:conf/icalp/AbboudB18} that an instance of Formula-SAT on a formula of size $s$ over $n$ inputs can be reduced to an instance of Formula-Pair on two sets of size $N = \mathcal{O}(2^{n/2})$ and a formula of size $\mathcal{O}(s)$ in linear time  (in particular, they reduce from a harder problem they call $\mathcal{F}_1$-Formula-SAT). 
Note that we may assume that $F$ contains no input gates with negated binary variables, 
since if variable $x_i$ is negated in $F$, we can flip bit $a_i$ for all $a \in A$.

We begin our reduction from Formula-Pair to PMLG by considering a formula $F$ and some input bit assignments $a \in A$ and $b \in B$. We then construct a pattern $P$ and labeled graph $G$ such that $P$ occurs in $G$ if and only if together $a$ and $b$ satisfy $F$. In this step, we must ensure that our construction of $P$ only relies on the input bit assignments of $a$, and our construction of $G$ only relies on the input bit assignments of $b$. This allows us to create patterns $P_1, P_2, \dots, P_N$ corresponding to the $N$ bit assignments in $A$, and graphs $G_1, G_2, \dots, G_N$ corresponding to the $N$ bit assignments in $B$. Then we will have that $P_i$ occurs in $G_j$ if and only if $F(a, b) = 1$, where $a \in A$ is the bit assignment corresponding to $P_i$, and $b \in B$ is the  bit assignment corresponding to $G_j$. Finally, we combine these patterns and graphs into a product pattern $P$ and a product graph $G$ such that $P$ occurs in $G$ if and only if some $P_i$ occurs in some $G_j$. This will complete the reduction.

\subsection{Reduction}
Given a deMorgan formula $F$ and a complete assignment of input bits $(a, b)$ where $a \in A$ and $b \in B$, we will construct a corresponding pattern $P$ and labeled DAG $G$ over alphabet $\{0, 1, \$\}$ such that $P$ occurs in $G$ if and only if the output of $F$ is $1$ on input $(a, b)$. This pattern and graph will be built recursively, starting with the input gates as a base case. For a gate $g = (g_1 * g_2)$ where $* \in \{\lor, \land\}$, we will construct a corresponding pattern and graph for gate $g$ by merging the patterns and graphs of subgates $g_1$ and $g_2$. At each step in this process, the pattern corresponding to gate $g$ occurs in the graph corresponding to gate $g$ if and only if $g$ evaluates to $1$ on input $(a, b)$.

\bigskip
\noindent
\textbf{Invariants.} 
We will maintain the following invariants during this recursive procedure. Let $g$ be a gate of $F$ with height $h$, and let $P$ and $G$ be the pattern and graph corresponding to gate $g$ in our construction. 

\begin{enumerate}
    \item Graph $G$ will have a designated source vertex and sink vertex, both with label ``$1$''.  Every maximal path in $G$ will be of length $|P|$ and start and end at the source and sink vertices of $G$ respectively.
    \item The construction of pattern $P$ is independent of the choice of bit assignment $b \in B$, and the construction of graph  $G$ is independent of the choice of bit assignment $a \in A$.  
    \item Pattern $P$ occurs in $G$ if and only if $g$ has output $1$ on input $(a, b)$.
\end{enumerate}

Observe that by the first invariant, every occurrence of pattern $P$ in graph $G$ will start at the source vertex of $G$ and end at the sink vertex of $G$. If this is the case, we will say that $G$ matches $P$. We will also refer to the designated source and sink vertices of $G$ as the start and end vertices of $G$.

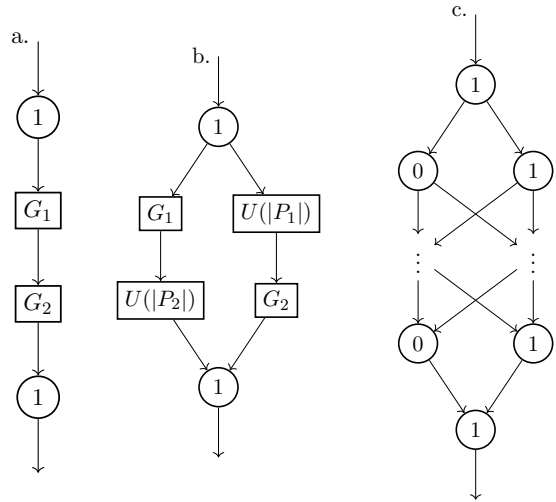
\begin{figure}[H]
\centering
\begin{minipage}[l]{.06\textwidth}
\centering
\resizebox{\textwidth}{!}{
\begin{tikzpicture}[
  emptynode/.style = {draw=none, minimum size=5.5mm},
  lightnode/.style = {circle, draw=black!255, thick, minimum size=5.5mm},
  squarenode/.style={draw=black!255, thick, minimum size=5mm},
  level 1/.style = {sibling distance=6cm},
  level 2/.style = {sibling distance=3cm},
  level 3/.style = {sibling distance=2cm},
  level 4/.style = {sibling distance=.8cm},
]
\node[minimum width=0cm,anchor = north east] {a.};
\node[emptynode] {} 
child[->,line width=.5pt]
    {
    node[lightnode]{$1$}
    child[->,line width=.5pt]
        {
        node[squarenode]{$G_1$}
        child[->,line width=.5pt]
            {
            node[squarenode]{$G_2$}
            child[->,line width=.5pt]
                {
                node[lightnode]{$1$}
                child[->,line width=.5pt]
                    {
                    node[emptynode]{}
                    }
                }
            }
        }
    };
\end{tikzpicture}
}
\end{minipage}
\begin{minipage}[r]{.2\textwidth}
\centering
\resizebox{\textwidth}{!}{
\begin{tikzpicture}[
  emptynode/.style = {draw=none, minimum size=5.5mm},
  lightnode/.style = {circle, draw=black!255, thick, minimum size=5.5mm},
  squarenode/.style={draw=black!255, thick, minimum size=5mm},
  level 1/.style = {sibling distance=6cm},
  level 2/.style = {sibling distance=2cm},
  level 3/.style = {sibling distance=2cm},
  level 4/.style = {sibling distance=2cm},
]
\node[minimum width=0cm,anchor = north east] {b.};
\node[emptynode] {} 
child[->,line width=.5pt]
    {
    node[lightnode]{$1$}
    child[->,line width=.5pt]
        {
        node[squarenode]{$G_1$}
        child[->,line width=.5pt]
            {
            node[squarenode]{$U(|P_2|)$}
            child[]
                {
                node[emptynode]{}
                edge from parent[draw=none]
                }
            child[->,line width=.5pt]
                {
                node[lightnode](1){$1$}
                child[->,line width=.5pt]
                    {
                    node[emptynode]{}
                    }
                }
            }
        }
    child[->,line width=.5pt]
        {
        node[squarenode]{$U(|P_1|)$}
        child[->,line width=.5pt]
            {
            node[squarenode](G2){$G_2$}
            }
        }
    };
\draw[->] (G2) -- (1);
\end{tikzpicture}
}
\end{minipage}
\begin{minipage}[r]{.18\textwidth}
\centering
\resizebox{\textwidth}{!}{
\begin{tikzpicture}[
  emptynode/.style = {draw=none, minimum size=5.5mm},
  lightnode/.style = {circle, draw=black!255, thick, minimum size=5.5mm},
  squarenode/.style={draw=black!255, thick, minimum size=5mm},
  level 1/.style = {sibling distance=6cm},
  level 2/.style = {sibling distance=2cm},
  level 3/.style = {sibling distance=2cm},
  level 4/.style = {sibling distance=2cm},
  level 5/.style = {sibling distance=2cm},
  level 6/.style = {sibling distance=2cm},
]
\node[minimum width=0cm,anchor = north east] {c.};
\node[emptynode](0) {} 
child[->,line width=.5pt]
    {
    node[lightnode](1){$1$}
    child[->,line width=.5pt]
        {
        node[lightnode](2){$0$}
        child[->,line width=.5pt]
                {
                node[->,emptynode](4){$\vdots$}
                    child[->,line width=.5pt]
                    {
                    node[lightnode](5){$0$}
                    child[->,line width=.5pt]
                        {
                        node[emptynode](6){}
                        edge from parent[draw=none]
                        }
                    child[->,line width=.5pt]
                        {
                        node[lightnode](7){$1$}
                        child[->,line width=.5pt]
                            {
                            node[emptynode](8){}
                            }
                        }
                    }
                }
        }
    child[->,line width=.5pt]
        {
        node[lightnode](9){$1$}
        child[->,line width=.5pt]
                {
                node[emptynode](11){$\vdots$}
                child[->,line width=.5pt]
                    {
                    node[lightnode](12){$1$}
                    }
                }
        }
    };
\draw[->] (2) -- (11);
\draw[->] (9) -- (4);
\draw[->] (4) -- (12);
\draw[->] (11) -- (5);
\draw[->] (12) -- (7);
\end{tikzpicture}
}
\end{minipage}

\caption{
From left to right: the graph constructed for gate $g = (g_1 \land g_2)$, the graph constructed for gate $g = (g_1 \lor g_2)$, and the Universal Subgraph $U(x)$. Note that Universal Subgraph $U(x)$ has a series of $x - 2$ vertex pairs labeled $0$ and $1$, so that its  maximal path length is $x$.
}
\label{fig:pmlg_gates}
\end{figure}

\bigskip
\noindent
\textbf{Input Gate.}
Each input gate $g$ in $F$ takes as input a binary variable $z$. We will design a graph $G$ and pattern $P$ such that $G$ matches $P$ if and only if $z$ had value $1$ in bit assignment $(a, b)$, and hence $g$ evaluates to $1$. Our construction depends on whether $z$ corresponds to an input bit in $a$ or $b$.
\begin{itemize}
\item \textbf{Case 1. $\bm{z}$ corresponds to some $\bm{a_i \in a}$.} 
We let $P := 1a_i1$ and $G$ be a path of length three with all vertices labeled $1$.   
\item \textbf{Case 2. $\bm{z}$ corresponds to some $\bm{b_i \in b}$.}
We let $P := 111$ and $G$ be a path of length three with the first and last vertex labeled $1$ and the middle vertex labeled $b_i$.
\end{itemize}

The start vertex of $G$ will be the first vertex in the path, and the end vertex of $G$ will be the third (last) vertex in the path. Then our graph $G$ matches pattern $P$ if and only if $z = 1$ and thus the input gate evaluates to true.  Additionally, the construction of $P$  does not depend on $b$ and the construction of $G$ does not depend on $a$. All invariants are satisfied.

\bigskip
\noindent
\textbf{AND Gate.} Given a gate $g = (g_1 \land g_2)$ and the graphs and patterns $(G_1, P_1)$ and $(G_2, P_2)$ corresponding to gates $g_1$ and $g_2$ respectively, we must construct a product graph $G$ and pattern $P$ such that $G$ matches $P$ if and only if $G_1$ matches $P_1$ and $G_2$ matches $P_2$. This is done rather easily. Let $P := 1P_1 P_21$. Now let our product graph $G$ be defined as in Figure \ref{fig:pmlg_gates}.a.
Our start vertex is labeled $1$ and has an outgoing edge to the start vertex of subgraph $G_1$. The end vertex of $G_1$ in turn has an outgoing edge to start vertex of subgraph $G_2$, whose own end vertex has an outgoing edge to the final vertex of $G$. We now verify all invariants are satisfied. 

\begin{itemize}
    \item \textbf{Invariant 1.} We assume that every maximal path in $G_1$ (respectively $G_2$) is of length $|P_1|$ (respectively $|P_2|$). Then by the construction of $P$ and $G$, every maximal path in $G$ is of length $|P|$. The invariant is maintained.
    \item \textbf{Invariant 2.} Assuming that the construction of $P_1$ and $P_2$ is independent of $b$, and the construct of $G_1$ and $G_2$ is independent of $a$, it follows that the construction of pattern $P$ is independent of bit assignment $b$, and the construction of graph $G$ is independent of bit assignment $a$.
    \item \textbf{Invariant 3.} Since every occurrence of $P$ in $G$ starts at the start vertex of $G$ and ends at the end vertex, we must conclude that $P$ occurs in $G$ if and only if $P_1$ occurs in $G_1$ and $P_2$ occurs in $G_2$. Then by our invariant $P$ occurs in $G$ if and only if $g$ evaluates to $1$ on input $(a, b)$. The invariant is preserved. 
\end{itemize}
\noindent
\textbf{OR Gate.}
Given a gate $g = (g_1 \lor g_2)$ and the graphs and patterns $(G_1, P_1)$ and $(G_2, P_2)$ corresponding to gates $g_1$ and $g_2$ respectively, we must construct a product graph $G$ and pattern $P$ such that $G$ matches $P$ if and only if $G_1$ matches $P_1$ or $G_2$ matches $P_2$. As with our AND gate, we let $P := 1P_1P_21$. Our product graph $G$ (see Figure \ref{fig:pmlg_gates}.b) splits into two branches. One branch checks if $G_1$ matches $P_1$ and ignores $P_2$, while the other branch checks if $G_2$ matches $P_2$ and ignores $P_1$. We are able to ignore $P_2$ (respectively $P_1$) by constructing a `universal' subgraph that matches all binary strings that start and end with $1$ and are of length $|P_2|$ (respectively $|P_1|$). We let $U(x)$ denote the universal subgraph for length $x$, and we depict our construction of $U(x)$ in Figure \ref{fig:pmlg_gates}.c. Observe that graphs $U(|P_1|)$ and $U(|P_2|)$ match $P_1$ and $P_2$ respectively. We now check that all invariants are satisfied.



\begin{itemize}
    \item \textbf{Invariant 1.}  A similar argument as in the AND gate shows that every maximal path in $G$ is of length $|P|$ and passes through the start and end vertices of $G$. The invariant is preserved.
    \item \textbf{Invariant 2.} Pattern $P$ is independent of bit assignment $b$ by a similar argument as with the AND gate construction. However, for our graph $G$, we must verify that subgraphs $U(|P_1|)$ and $U(|P_2|)$ of $G$ do not depend on bit assignment $a$. This will follow from proving that the lengths of patterns $P_1$ and $P_2$ do not depend on the bit assignment $a$. Note that in each of the input, AND, and OR gate constructions, the length of the constructed pattern is the same regardless of the bit assignment $a$. Thus we conclude that $U(|P_1|)$ and $U(|P_2|)$ are independent of the bit assignment $a$, and therefore the construction of graph $G$ is independent of the bit assignment $a$.
   \item \textbf{Invariant 3.} Since every occurrence of pattern $P$ starts at the start vertex of $G$ and ends at the end vertex, it is immediate that $G$ matches $P$ if and only if $G_1$ matches $P_1$ or $G_2$ matches $P_2$. It immediately follows from our invariant that $G$ matches $P$ if and only if gate $g = (g_1 \lor g_2)$ evaluates to $1$ on input $(a, b)$.
\end{itemize}

\subsection{Completing the Reduction}
Now corresponding to our formula $F$ of size $s$ and a complete assignment of input bits $(a, b)$, we can build a pattern $P$ and a graph $G$ such that $G$ matches $P$ if and only if assignment $(a, b)$ satisfies $F$. Note that we only add a constant number of symbols to our pattern $P$ for each gate in $F$, and there are fewer than $2s$ gates in $F$, so $|P| = \mathcal{O}(s)$. On the other hand, each OR gate in $F$ can contribute $\mathcal{O}(|P|)$ vertices and edges to our final graph $G$. It follows that $G$ is of size $\mathcal{O}(s^2)$. 

Using our construction, for every $a \in A$ we may construct a corresponding pattern $P$, and for every $b \in B$ we may construct a corresponding graph $G$. We will denote these patterns and graphs by $P_1, P_2, \dots, P_N$ and $G_1, G_2, \dots, G_N$  respectively. 
Note that each pattern $P_j$ makes no assumptions on the bit assignment $b$, and graph $G_i$ makes no assumptions on the bit assignment $a$. It follows that $G_i$ matches $P_j$ if and only if together the corresponding bit assignments $a \in A$ and $b \in B$ satisfy $F$. 

\begin{figure}[ht]
\centering
\resizebox{\textwidth}{!}{
\begin{tikzpicture}[
    thick, shorten >=1pt, 
    auto,
    node distance=1.6cm,
  main node/.style={circle, draw, fill=none},
  empty node/.style={draw=none},
  squarenode/.style={draw=black!255, thick, minimum size=5mm},
]
\node[main node] (r2_1)[] {$\$$};
\node[squarenode] (r2_2) [right of = r2_1, label=above:{$1$}] {$U(\mu)$};
\node[main node] (r2_3) [right of = r2_2] {$\$$};

\node[main node] (r2_4) [right of = r2_3] {$\$$};
\node[squarenode] (r2_5) [right of = r2_4, label=above:{$N-1$}] {$U(\mu)$};
\node[main node] (r2_6) [right of = r2_5] {$\$$};

\node[main node] (r2_7) [right of = r2_6] {$\$$};
\node[squarenode] (r2_8) [right of = r2_7, label=above:{$2N-2$}] {$U(\mu)$};
\node[main node] (r2_9) [right of = r2_8] {$\$$};

\node[main node] (r1_1)[above of = r2_1] {$\$$};
\node[main node] (r1_2)[above of = r2_4] {$\$$};
\node[main node] (r1_3)[above of = r2_7] {$\$$};

\node[main node] (r3_1) [below right of = r2_6] {$\$$};
\node[main node] (r3_2) [below right of = r2_9] {$\$$};

\node[main node] (r4_1) [below of = r3_1] {$\$$};
\node[main node] (r4_2) [below of = r3_2] {$\$$};

\node[squarenode] (r5_1) [below of = r4_1] {$G_1$};
\node[main node] (r5_2) [right of = r5_1] {$\$$};
\node[squarenode] (r5_3) [below of = r4_2] {$G_N$};
\node[main node] (r5_4) [right of = r5_3] {$\$$};

\node[main node] (r6_1) [below right of = r5_2] {$\$$};
\node[main node] (r6_2) [below right of = r5_4] {$\$$};

\node[squarenode] (r7_1) [below left of = r6_1, label=below:{$1$}] {$U(\mu)$};
\node[main node] (r7_2) [right of = r7_1] {$\$$};
\node[main node] (r7_3) [right of = r7_2] {$\$$};
\node[squarenode] (r7_4) [below left of = r6_2, label=below:{$N$}] {$U(\mu)$};
\node[main node] (r7_5) [right of = r7_4] {$\$$};
\node[main node] (r7_6) [right of = r7_5] {$\$$};
\node[squarenode] (r7_7) [right of = r7_6, label=below:{$2N-2$}] {$U(\mu)$};
\node[main node] (r7_8) [right of = r7_7] {$\$$};

\node[main node] (r8_1) [below right of = r7_2] {$\$$};
\node[main node] (r8_2) [below right of = r7_5] {$\$$};
\node[main node] (r8_3) [below right of = r7_8] {$\$$};

\path[]
    (r2_1) edge [->](r2_2) 
    (r2_2) edge [->] (r2_3)
    (r2_3) edge [loosely dotted] (r2_4)
    (r2_4) edge [->](r2_5) 
    (r2_5) edge [->] (r2_6)
    (r2_6) edge [loosely dotted] (r2_7)
    (r2_7) edge [->](r2_8) 
    (r2_8) edge [->] (r2_9)
    
    (r1_1) edge [->](r2_1) 
    (r1_2) edge [->] (r2_4)
    (r1_3) edge [->] (r2_7)
    
    (r3_1) edge [->] (r4_1)
    (r3_2) edge [->] (r4_2)
    
    (r4_1) edge [->] (r5_1)
    (r4_2) edge [->] (r5_3)
    
    (r2_6) edge [bend right, ->] (r5_1)
    (r2_9) edge [bend right, ->] (r5_3)
    
    (r5_1) edge [->] (r5_2)
    (r5_2) edge [loosely dotted] (r5_3)
    (r5_3) edge [->] (r5_4)
    
    (r5_2) edge [->] (r6_1)
    (r5_4) edge [->] (r6_2)
    
    (r5_2) edge [->] (r7_1)
    (r5_4) edge [->] (r7_4)
    
    (r7_1) edge [->] (r7_2)
    (r7_2) edge [loosely dotted] (r7_3)
    (r7_3) edge [->] (r7_4)
    (r7_4) edge [->] (r7_5)
    (r7_5) edge [loosely dotted] (r7_6)
    (r7_6) edge [->] (r7_7)
    (r7_7) edge [->] (r7_8)
    
    (r7_2) edge [->] (r8_1)
    (r7_5) edge [->] (r8_2)
    (r7_8) edge [->] (r8_3);
\end{tikzpicture}
}
\caption{Our final graph $G$. Here $\mu = |P_i|$.}
\label{fig:pmlg_final}
\end{figure}
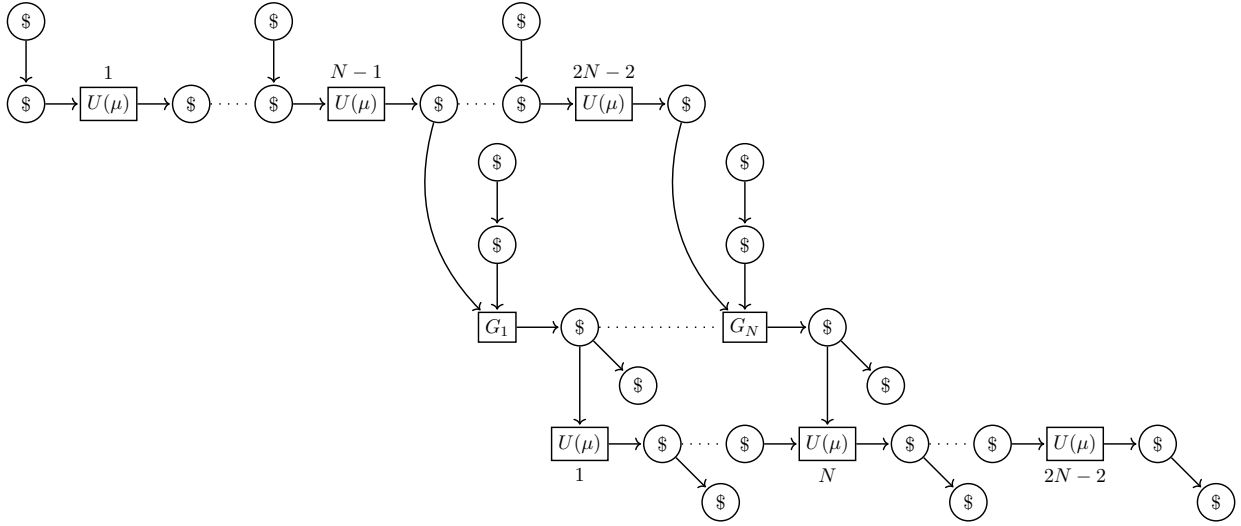

Next, we construct a final graph $G$ and pattern $P$ such that $P$ occurs in $G$ if and only if some $G_i$ matches some $P_j$. This will complete our reduction. We define our final pattern $P$ as follows:
$P := \$\$P_1\$P_2 \$\cdots \$ P_N \$\$$.
The structure of our final graph $G$ 
is similar to the final graph presented in \cite{DBLP:conf/icalp/EquiGMT19}. We present this graph in Figure \ref{fig:pmlg_final} and briefly explain the intuition behind it.
Let $\mu = |P_i|$ for any $i$. Then subgraph $U(\mu)$ will match any subpattern $P_i$ in $P$. The graph $G$ uses $U(\mu)$ to match the subpatterns $P_i$ in $P$ that do not match with any $G_j$. Note that since pattern $P$ has a prefix of two $\$$ symbols and a suffix of two $\$$ symbols, $P$ is forced to pass through the second row of $G$. More specifically, the first row of $G$ alone cannot match the $\$\$$ suffix of $P$, and the third row of $G$ alone cannot match the $\$\$$ prefix of $P$. Then it can be seen that $P$ occurs in $G$ only if $P$ passes through the second row of $G$, and hence some subgraph $G_i$ matches some subpattern $P_j$. Then by construction, $P$ occurs in $G$ if and only if there exists  $a \in A$ and $b \in B$ such that $F(a, b) = 1$. Furthermore, our final graph is a DAG of size $\mathcal{O}(N \cdot s^2)$ and our final pattern $P$ is of length $\mathcal{O}(N \cdot s)$. This completes our reduction from Formula-SAT to PMLG on DAGs.

\section{Reduction from Formula-SAT to Subtree Isomorphism}
\label{sec:subtree}
\subsection{Technical Overview}
We begin our reduction from Formula-Pair to Subtree Isomorphism by considering a formula $F$ and some input bit assignments $a \in A$ and $b \in B$. We then construct trees $T_a$ and $T_b$ such that $T_a$ is contained in $T_b$ if and only if together $a$ and $b$ satisfy $F$. In this step it is important that we ensure that our construction of $T_a$ only relies on the input bit assignments of $a$, and our construction of $T_b$ only relies on the input bit assignments of $b$. This allows us to create $N$ $T_a$ trees corresponding to the $N$ bit assignments $a$ in $A$, and $N$ $T_b$ trees corresponding to the $N$ bit assignments $b$ in $B$. Then we will have that some $T_a$ tree is contained in  some $T_b$ tree if and only if the corresponding bit assignments $a \in A$ and $b \in B$ satisfy  $F(a, b) = 1$. Finally, we combine these trees into two final trees $T_A$ and $T_B$ such that $T_A$ is contained in $T_B$ if and only if some $T_{a}$ is contained in some $T_{b}$. This will complete the reduction.

\subsection{Reduction}
Given a deMorgan formula $F$ and a complete assignment of input bits $(a, b)$ where $a \in A$ and $b \in B$, we will construct the corresponding rooted trees $T_{a}$ and $T_{b}$ such that $T_{a}$ is contained in $T_{b}$ if and only if the output of $F(a, b) = 1$. These trees will be constructed recursively, starting with the input gates of $F$ as a base case. For a gate $g = (g_1 * g_2)$ where $* \in \{\lor, \land\}$, we will construct the corresponding trees $T_a^g$ and $T_b^g$ for gate $g$ by merging the trees of subgates $g_1$ and $g_2$. At each step in this process, $T_a^g$ will be contained in $T_b^g$  if and only if gate $g$ has output $1$ on input $(a, b)$.

\bigskip
\noindent
\textbf{Invariants.} We will maintain the following invariants throughout our construction. Let $g$ be a gate of $F$ with height $h$.
\begin{enumerate}
    \item The height of $T_a^g$ is equal to the height of $T_b^g$ and is at most $4h$.
    
    \item The construction of $T_a^g$ is independent of the choice of bit assignment $b \in B$, and the construction of $T_b^g$ is independent of the choice of bit assignment $a \in A$.
    
    \item Tree $T_a^g$ is contained in tree $T_b^g$ if and only if gate $g$ has output $1$ on input $(a, b)$.
\end{enumerate}

\begin{center}
\begin{minipage}{.47\textwidth}
\label{fig:input_gate}
\centering
\resizebox{.77\textwidth}{!}{
\centering
\begin{tabular}{ccc}
   Input  &  ~~~~~~~$T_a^g$ & ~~~~~$T_b^g$ \\
   \hline
   &&\\
   &&\\
   &&\\
   \vspace{-4.5em} $a_i = 0$ & & \\
   & 
   \begin{tikzpicture}[
    lightnode/.style = {circle, fill = black, inner sep=2pt},
    level 1/.style = {sibling distance=20mm},
    ]

        \node[lightnode, label=left:{\Large $v_a$}] {}
            child[line width=.5pt] {
                node[lightnode, label=left:{}] {}
                edge from parent[solid]
            }
            child[line width=.5pt] {
                node[lightnode, label=left:{}] {}
                edge from parent[solid]
            };
   \end{tikzpicture}
   &
    \begin{tikzpicture}[
    lightnode/.style = {circle, fill = black, inner sep=2pt, solid},
    level 1/.style = {sibling distance=20mm},
    ]

        \node[lightnode, label=left:{\Large $v_b$}] {}
            child[line width=.5pt] {
                node[lightnode, label=left:{}] {}
                edge from parent[solid]
            };
   \end{tikzpicture} \\

   \hline
   &&\\
   &&\\
   &&\\
   \vspace{-4.5em}
      $a_i = 1$ & \\
      &
   \begin{tikzpicture}[
    lightnode/.style = {circle, fill = black,inner sep=2pt, solid},
    level 1/.style = {sibling distance=20mm},
    ]
    \node[lightnode, label=left:{\Large $v_a$}] {}
        child[line width=.5pt] {
            node[lightnode, label=left:{}] {}
            edge from parent[solid]
        };
   \end{tikzpicture}
   &
    \begin{tikzpicture}[
    lightnode/.style = {circle, fill = black, inner sep=2pt, solid},
    level 1/.style = {sibling distance=20mm},
    ]

    \node[lightnode, label=left:{\Large $v_b$}] {}
       child[line width=.5pt] {
            node[lightnode, label=left:{}] {}
            edge from parent[solid]
        };
   \end{tikzpicture} \\
   
    \hline
   &&\\
   &&\\
   &&\\
   \vspace{-4.5em} $b_j = 0$ & & \\
   & 

   \begin{tikzpicture}[
    lightnode/.style = {circle, fill = black, inner sep=2pt, solid},
    level 1/.style = {sibling distance=20mm},
    ]

    \node[lightnode, label=left:{\Large $v_a$}] {}
        child[line width=.5pt] {
            node[lightnode, label=left:{}] {}
            edge from parent[solid]
        }
        child[line width=.5pt] {
            node[lightnode, label=left:{}] {}
            edge from parent[solid]
        };
   \end{tikzpicture}
    &
    \begin{tikzpicture}[
    lightnode/.style = {circle, fill = black, inner sep=2pt, solid},
    level 1/.style = {sibling distance=20mm},
    ]

    \node[lightnode, label=left:{\Large $v_b$}] {}
        child[line width=.5pt] {
            node[lightnode, label=left:{}] {}
            edge from parent[solid]
        };
   \end{tikzpicture} \\
   \hline
   &&\\
   &&\\
   &&\\
      \vspace{-4.5em} $b_j = 1$ & & \\
      &
   \begin{tikzpicture}[
    lightnode/.style = {circle, fill = black, inner sep=2pt, solid},
    level 1/.style = {sibling distance=20mm},
    ]
        \node[lightnode, label=left:{\Large $v_a$}] {}
            child[line width=.5pt] {
                node[lightnode, label=left:{}] {}
                edge from parent[solid]
            }
            child[line width=.5pt] {
                node[lightnode, label=left:{}] {}
                edge from parent[solid]
            };
   \end{tikzpicture}
   &
    \begin{tikzpicture}[
    lightnode/.style = {circle, fill = black, inner sep=2pt, solid},
    level 1/.style = {sibling distance=20mm},
    ]
        \node[lightnode, label=left:{\Large $v_b$}] {}
            child[line width=.5pt] {
                node[lightnode, label=left:{}] {}
                edge from parent[solid]
            }
            child[line width=.5pt] {
                node[lightnode, label=left:{}] {}
                edge from parent[solid]
            };
   \end{tikzpicture} \\
   \hline
\end{tabular}
}

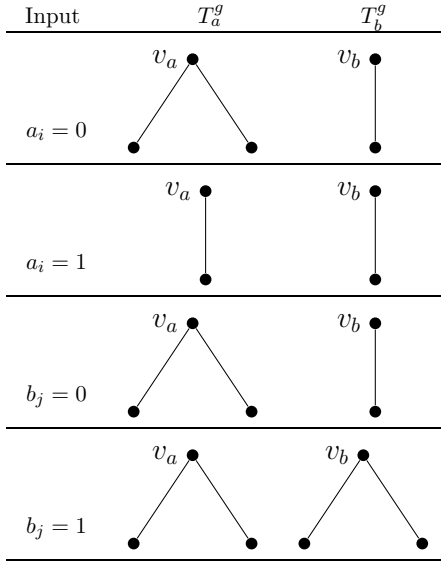
\captionof{figure}{The trees $T_a^g$ and $T_b^g$ corresponding to input gate $g = a_i$ or $g = b_j$.}
\label{fig:subtree_input}
\end{minipage}
\begin{minipage}{.47\textwidth}
\begin{minipage}{1\textwidth}
\resizebox{\textwidth}{!}{
\begin{tikzpicture}[
  lightnode/.style = {circle, fill = black, inner sep=2pt, solid},
  emptynode/.style={},
  level 1/.style = {level distance=12mm, sibling distance=1.2cm},
  level 2/.style = {level distance=12mm, sibling distance=1.2cm},
  level 3/.style = {level distance=12mm, sibling distance=1.2cm},
  level 4/.style = {level distance=12mm, sibling distance=1.2cm},
  triangle/.style={
  draw,solid,shape border uses incircle,
  isosceles triangle,shape border rotate=90,yshift=-6.5mm, minimum width=2cm},
]

\node[lightnode, label=above:{\Large $v_a^0$}] {}
    child[line width=.5pt]{
        node[emptynode] (r2_1) {}
        edge from parent[draw = none]
    }
    child[line width=.5pt]{
        node[emptynode] (r2_2) {}
        edge from parent[draw = none]
    }
    child[line width=.5pt]{
        node[lightnode, label=above:{\Large $1$}] (r2_3) {}
        edge from parent[draw = none]
    }
    child[line width=.5pt] {
        node[lightnode, label=right:{\Large $v_a^1$}] (r2_4) {}
    }
    child[line width=.5pt] {
        node[emptynode] (r2_5) {}
        child[line width=.5pt]{
            node[emptynode] (r3_1) {}
            edge from parent[draw = none]
        }
        child[line width=.5pt]{
            node[lightnode, label=above:{\Large $2$}] (r3_2) {}
            edge from parent[draw = none]
        }
        child[line width=.5pt]{
            node[lightnode] (r3_3) {}
            edge from parent[draw = none]
        }
        child[line width=.5pt] {
            node[lightnode, label=right:{\Large $v_a^3$}] (r3_4) {}
            child[line width=.5pt] {
                node[lightnode] (r4_1) {}
                child[line width=.5pt]{
                    node[triangle] {\Large $T_a^1$}
                }
            }
            edge from parent[draw = none]
        }
        child[line width=.5pt] {
            node[emptynode] (r3_5) {}
            edge from parent[draw = none]
        }
        child[line width=.5pt] {
        node[lightnode, label=left:{\Large $v_a^4$}] (r3_6) {}
        child[line width=.5pt] {
                node[lightnode] (r4_2) {}
                child[line width=.5pt]{
                    node[triangle] {\Large $T_a^2$}
                }
            }
        edge from parent[draw = none]
        }
        child[line width=.5pt]{
            node[lightnode,  label=above:{\Large $1$}] (r3_7) {}
            edge from parent[draw = none]
        }
        child[line width=.5pt]{
            node[emptynode] (r3_8) {}
            edge from parent[draw = none]
        }
        child[line width=.5pt] {
            node[emptynode] (r3_9) {}
            edge from parent[draw = none]
        }
        edge from parent[draw = none]
    }
    child[line width=.5pt] {
        node[lightnode, label=left:{\Large $v_a^2$}] (r2_6) {}
    }
    child[line width=.5pt]{
        node[lightnode] (r2_7) {}
        edge from parent[draw = none]
    }
    child[line width=.5pt]{
        node[lightnode,label=above:{\Large $2$}] (r2_8) {}
        edge from parent[draw = none]
    }
    child[line width=.5pt] {
        node[emptynode] (r2_9) {}
        edge from parent[draw = none]
    };

\draw[solid] (r2_3) to (r2_4);
\draw[solid] (r2_6) to (r2_7);
\draw[solid] (r2_7) to (r2_8);

\draw[solid] (r2_4) to (r3_4);
\draw[solid] (r2_6) to (r3_6);

\draw[solid] (r3_2) to (r3_3);
\draw[solid] (r3_3) to (r3_4);
\draw[solid] (r3_6) to (r3_7);
\end{tikzpicture}
}
\vspace{1mm}
\end{minipage}
\begin{minipage}{1\textwidth}
\resizebox{\textwidth}{!}{
\begin{tikzpicture}[
  lightnode/.style = {circle, fill = black,inner sep=2pt, solid},
  emptynode/.style={},
  level 1/.style = {level distance=12mm, sibling distance=1.2cm},
  level 2/.style = {level distance=12mm, sibling distance=1.2cm},
  level 3/.style = {level distance=12mm, sibling distance=1.2cm},
  level 4/.style = {level distance=12mm, sibling distance=1.2cm},
  triangle/.style={
  draw,solid,shape border uses incircle,
  isosceles triangle,shape border rotate=90,yshift=-6.5mm, minimum width=2cm},
]

\node[lightnode, label=above:{\Large $v_b^0$}] {}
    child[line width=.5pt]{
        node[emptynode] (r2_1) {}
        edge from parent[draw = none]
    }
    child[line width=.5pt]{
        node[emptynode] (r2_2) {}
        edge from parent[draw = none]
    }
    child[line width=.5pt]{
        node[lightnode, label=above:{\Large $1$}] (r2_3) {}
        edge from parent[draw = none]
    }
    child[line width=.5pt] {
        node[lightnode, label=right:{\Large $v_b^1$}] (r2_4) {}
    }
    child[line width=.5pt] {
        node[emptynode] (r2_5) {}
        child[line width=.5pt]{
            node[emptynode] (r3_1) {}
            edge from parent[draw = none]
        }
        child[line width=.5pt]{
            node[lightnode,label=above:{\Large $2$}] (r3_2) {}
            edge from parent[draw = none]
        }
        child[line width=.5pt]{
            node[lightnode] (r3_3) {}
            edge from parent[draw = none]
        }
        child[line width=.5pt] {
            node[lightnode, label=right:{\Large $v_b^3$}] (r3_4) {}
            child[line width=.5pt] {
                node[lightnode] (r4_1) {}
                child[line width=.5pt]{
                    node[triangle] {\Large $T_b^1$}
                }
            }
            edge from parent[draw = none]
        }
        child[line width=.5pt] {
            node[emptynode] (r3_5) {}
            edge from parent[draw = none]
        }
        child[line width=.5pt] {
        node[lightnode, label=left:{\Large $v_b^4$}] (r3_6) {}
        child[line width=.5pt] {
                node[lightnode] (r4_2) {}
                child[line width=.5pt]{
                    node[triangle] {\Large $T_b^2$}
                }
            }
        edge from parent[draw = none]
        }
        child[line width=.5pt]{
            node[lightnode,label=above:{\Large $1$}] (r3_7) {}
            edge from parent[draw = none]
        }
        child[line width=.5pt]{
            node[emptynode] (r3_8) {}
            edge from parent[draw = none]
        }
        child[line width=.5pt] {
            node[emptynode] (r3_9) {}
            edge from parent[draw = none]
        }
        edge from parent[draw = none]
    }
    child[line width=.5pt] {
        node[lightnode, label=left:{\Large $v_b^2$}] (r2_6) {}
    }
    child[line width=.5pt]{
        node[lightnode] (r2_7) {}
        edge from parent[draw = none]
    }
    child[line width=.5pt]{
        node[lightnode,label=above:{\Large $2$}] (r2_8) {}
        edge from parent[draw = none]
    }
    child[line width=.5pt] {
        node[emptynode] (r2_9) {}
        edge from parent[draw = none]
    };

\draw[solid] (r2_3) to (r2_4);
\draw[solid] (r2_6) to (r2_7);
\draw[solid] (r2_7) to (r2_8);

\draw[solid] (r2_4) to (r3_4);
\draw[solid] (r2_6) to (r3_6);

\draw[solid] (r3_2) to (r3_3);
\draw[solid] (r3_3) to (r3_4);
\draw[solid] (r3_6) to (r3_7);
\end{tikzpicture}
}
\end{minipage}
\captionof{figure}{The trees $T_a^g$ (top) and $T_b^g$ (bottom) corresponding to AND gate $g = (g_1 \land g_2)$.}
\label{fig:subtree_AND}
\end{minipage}
\end{center}

\bigskip
\noindent
\textbf{Input Gate.} 
Given an input gate $g$ corresponding to a bit value $a_i \in a$ (respectively, a bit value $b_j \in b$), we will construct trees $T_a^g$ and $T_b^g$ so that $T_a^g$ is contained in $T_b^g$ if and only if $a_i = 1$ (respectively, $b_j = 1$). We construct $T_a^g$ and $T_b^g$ as in Figure \ref{fig:subtree_input}. These trees are rooted at vertices $v_a$ and $v_b$ respectively. We define input gates of $F$ to have a height of one, so the trees in Figure \ref{fig:subtree_input} satisfy the first invariant.
The remaining two invariants can be verified by examining every case of Figure \ref{fig:subtree_input}.

\bigskip
\noindent
\textbf{AND Gate.}
Given an input gate $g = (g_1 \land g_2)$, and the trees $T_a^{1}, T_b^{1}$ and $T_a^{2}, T_b^{2}$ corresponding to gates $g_1$ and $g_2$ respectively, we wish to construct  trees $T_a^g$ and $T_b^g$ so that $T_a^g$ is contained in $T_b^g$ if and only if gate $g$ has output $1$ on input $(a, b)$. By our third invariant it suffices to ensure that $T_a^g$ is contained in $T_b^g$ if and only if $T_a^1$ is contained in $T_b^1$ AND $T_a^2$ is contained in $T_b^2$. We construct trees $T_a^g$ and $T_b^g$ as in Figure \ref{fig:subtree_AND}. The trees are rooted at vertices $v_a^0$ and $v_b^0$ respectively. 
We now verify that all invariants are satisfied.

\begin{itemize}
    \item \textbf{Invariant 1.} 
    By our inductive hypothesis tree $T_a^1$ has the same height as $T_b^1$ and $T_a^2$ has the same height as $T_b^2$, so it follows from our construction that $T_a^g$ has the same height as $T_b^g$. Now to see why the height of these trees is at most $4h$, note that subtrees $T_a^1, T_b^1, T_a^2, T_b^2$ have height at most $4(h-1)$, and so trees $T_a^g$ and $T_b^g$ have height at most $4(h-1) + 4 = 4h$. 
    
    
    
    \item \textbf{Invariant 2.} We assume that the construction of trees $T_a^1$ and $T_a^2$ is independent of $b$, and the trees $T_b^1$ and $T_b^2$ are independent of $a$.  Then it can be easily verified that tree $T_a^g$ does not depend on $b$, and tree $T_b^g$ does not depend on $a$. 
    
    \item \textbf{Invariant 3.} We must  show that tree $T_a^g$ is contained in tree $T_b^g$ if and only if $g$ evaluates to $1$ on bit assignment $(a, b)$. 
    By our inductive hypothesis, it suffices to show that $T_a^g$ is contained in $T_b^g$ if and only if $T_a^1$ is contained in $T_b^1$ AND $T_a^2$ is contained in $T_b^2$.
    The `if' direction is immediate from our construction: just map vertex $v_a^i$ in $T_a^g$ to vertex $v_b^i$ in $T_a^g$ for $i \in [0, 4]$, and map trees $T_a^1$ and $T_b^1$ to subtrees of $T_a^2$ and $T_b^2$ respectively. 
    
    For the `only if' direction we must prove that subtree $T_a^1$ can only map to a subtree of $T_b^1$, and subtree $T_a^2$ can only map to a subtree of $T_b^2$. First note that since trees $T_a^g$ and $T_b^g$ have the same height, every isomorphism between $T_a^g$ and a subtree $T_b^g$ must map the root vertex $v_a^0$ of $T_a^g$ to the root vertex $v_b^0$ of $T_b^g$. Now suppose $T_a^1$ is mapped to $T_b^2$ in some isomorphism between $T_a^g$ and a subtree of $T_b^g$. Then  vertex $v_a^3$ would be mapped to vertex $v_b^4$, and the path of length two hanging off $v_a^3$ would have nowhere to map to. It immediately follows that in every valid subtree isomorphism,  $T_a^1$ is mapped to  $T_b^1$, and  $T_a^2$ is mapped to $T_b^2$. Then $T_a^g$ is contained in $T_b^g$ if and only if $T_a^1$ is contained in $T_b^1$ and $T_a^2$ is contained in $T_b^2$. 
    

    

\end{itemize}


\begin{figure}
\centering
\begin{minipage}{.35\textwidth}
\resizebox{1\textwidth}{!}{
\begin{tikzpicture}[
  lightnode/.style = {circle, fill = black, inner sep=2pt, solid},
  emptynode/.style={},
  level 1/.style = {level distance=10mm, sibling distance=1.2cm},
  level 2/.style = {level distance=10mm, sibling distance=1.2cm},
  level 3/.style = {level distance=10mm, sibling distance=1.2cm},
  level 4/.style = {level distance=10mm, sibling distance=1.2cm},
  triangle/.style={
  draw,solid,shape border uses incircle,
  isosceles triangle,shape border rotate=90,yshift=-.8cm, minimum width=2cm},
]

\node[lightnode, label=above:{\Large $v_a^0$}] {}
    child[line width=.5pt] {
            node[lightnode] (r1_1){}
    }
    child[line width=.5pt]{
        node[emptynode](r1_2){}
        child[line width=.5pt] {
            node[emptynode] (r3_5) {}
            edge from parent[draw = none]
        }
        child[line width=.5pt]{
            node[lightnode, label=above:{\Large $1$}] (r2_3) {}
            edge from parent[draw = none]
        }
        child[line width=.5pt] {
            node[lightnode, label=right:{\Large $v_a^1$}] (r2_4) {}
            edge from parent[draw=none]
        }
        child[line width=.5pt] {
            node[emptynode] (r2_5) {}
            child[line width=.5pt]{
                node[lightnode,label=above:{\Large $2$}] (r3_2) {}
                edge from parent[draw = none]
            }
            child[line width=.5pt]{
                node[lightnode] (r3_3) {}
                edge from parent[draw = none]
            }
            child[line width=.5pt] {
                node[lightnode, label=right:{\Large $v_a^3$}] (r3_4) {}
                child[line width=.5pt] {
                    node[lightnode] (r4_1) {}
                    child[line width=.5pt]{
                        node[triangle] {\Large $T_a^1$}
                    }
                }
                edge from parent[draw = none]
            }
            child[line width=.5pt] {
                node[emptynode] (r3_5) {}
                edge from parent[draw = none]
            }
            child[line width=.5pt] {
            node[lightnode, label=left:{\Large $v_a^4$}] (r3_6) {}
            child[line width=.5pt] {
                    node[lightnode] (r4_2) {}
                    child[line width=.5pt]{
                        node[triangle] {\Large $T_a^2$}
                    }
                }
            edge from parent[draw = none]
            }
            child[line width=.5pt]{
                node[lightnode,  label=above:{\Large $1$}] (r3_7) {}
                edge from parent[draw = none]
            }
            child[line width=.5pt] {
                node[emptynode] {}
                edge from parent[draw = none]
            }
            edge from parent[draw = none]
        }
        child[line width=.5pt] {
            node[lightnode, label=left:{\Large $v_a^2$}] (r2_6) {}
            edge from parent[draw=none]
        }
        child[line width=.5pt]{
            node[lightnode] (r2_7) {}
            edge from parent[draw = none]
        }
        child[line width=.5pt]{
            node[lightnode, label=above:{\Large $2$}] (r2_8) {}
            edge from parent[draw = none]
        }
        edge from parent[draw = none]
    }
    child[line width=.5pt] {
            node[lightnode](r1_3){}
    }
    ;

\draw[solid] (r1_1) to (r2_4);
\draw[solid] (r1_3) to (r2_6);

\draw[solid] (r2_3) to (r2_4);
\draw[solid] (r2_6) to (r2_7);
\draw[solid] (r2_7) to (r2_8);

\draw[solid] (r2_4) to (r3_4);
\draw[solid] (r2_6) to (r3_6);

\draw[solid] (r3_2) to (r3_3);
\draw[solid] (r3_3) to (r3_4);
\draw[solid] (r3_6) to (r3_7);
\end{tikzpicture}
}
\end{minipage}
\begin{minipage}{.6\textwidth}
\resizebox{\textwidth}{!}{
\begin{tikzpicture}[
  lightnode/.style = {circle, fill = black, inner sep=2pt, solid},
  emptynode/.style = {circle},
  level 1/.style = {level distance=10mm, sibling distance=3cm},
  level 2/.style = {level distance=10mm, sibling distance=1.3cm},
  level 3/.style = {level distance=10mm, sibling distance=1.3cm},
  level 4/.style = {level distance=10mm, sibling distance=1.2cm},
  triangle/.style={isosceles triangle,
  draw,solid,shape border uses incircle,
  shape border rotate=90,yshift=-.8cm, minimum width=2cm},
]

\node[lightnode, label=above:{\Large $v_b^0$}] {}
    child[line width=.5pt]{
        node[lightnode] (r1_1) {}
    }
    child[line width=.5pt]{
        node[emptynode] {}
        child[line width=.5pt]{
            node[lightnode, label=above:{$1$}] (r2_3) {}
            edge from parent[draw = none]
        }
        child[line width=.5pt] {
            node[lightnode, label=right:{\Large $v_b^1$}] (r2_4) {}
            child[line width=.5pt]{
                node[lightnode,label=above:{\Large $2$}] (r3_2) {}
                edge from parent[draw = none]
            }
            child[line width=.5pt]{
                node[lightnode,] (r3_3) {}
                edge from parent[draw = none]
            }
            child[line width=.5pt] {
                node[lightnode, label=right:{\Large $v_b^4$}] (r3_4) {}
                child[line width=.5pt]{
                    node[lightnode] (r4_1) {}
                    child[line width=.5pt]{
                        node[triangle] {\Large $T_b^1$}
                    }
                }
                edge from parent[draw = none]
            }
            child[line width=.5pt] {
                node[emptynode] (a) {}
                edge from parent[draw = none]
            }
            child[line width=.5pt] {
                node[emptynode] (b) {}
                edge from parent[draw = none]
            }
            edge from parent[draw = none]
        }
        child[line width=.5pt] {
            node[emptynode] (r2_5) {}
            child[line width=.5pt] {
                node[emptynode] (c) {}
                edge from parent[draw = none]
            }
            edge from parent[draw = none]
        }
        child[line width=.5pt] {
            node[lightnode, label=left:{\Large $v_b^2$}] (r2_6) {}
            child[line width=.5pt] {
                node[emptynode] (r3_5) {}
                edge from parent[draw = none]
            }
            child[line width=.5pt] {
                node[lightnode, label=left:{\Large $v_b^5$}] (r3_6) {}
                child[line width=.5pt]{
                    node[lightnode] (r4_2) {}
                    child[line width=.5pt]{
                        node[triangle] {\Large $T_b^2$}
                    }
                    edge from parent[solid]
                }
                edge from parent[draw = none]
            }
            child[line width=.5pt]{
                node[lightnode,  label=above:{\Large $1$}] (r3_7) {}
                edge from parent[draw = none]
            }
            edge from parent[draw = none]
        }
        child[line width=.5pt]{
            node[lightnode] (r2_7) {}
            edge from parent[draw = none]
        }
        child[line width=.5pt]{
            node[lightnode, label=above:{2}] (r2_8) {}
            edge from parent[draw = none]
        }
        child[line width=.5pt] {
            node[emptynode] (r2_10) {}
            edge from parent[draw = none]
        }
        child[line width=.5pt] {
            node[lightnode, label=left:{\Large $v_b^3$}] (r2_11) {}
            child[line width=.5pt] {
                node[emptynode] (r3_10) {}
                edge from parent[draw = none]
            }
            child[line width=.5pt] {
                node[emptynode] (r3_10) {}
                edge from parent[draw = none]
            }
            child[line width=.5pt] {
                node[lightnode, label=left:{\Large $v_b^6$}] (r3_11) {}
                child[line width=.5pt]{
                    node[lightnode] (r4_3) {}
                    child[line width=.5pt]{
                        node[triangle] {\Large $U_g$}
                    }
                    edge from parent[solid]
                }
                edge from parent[draw = none]
            }
            child[line width=.5pt] {
                node[lightnode] (r3_12) {}
                edge from parent[draw = none]
            }
            child[line width=.5pt] {
                node[lightnode,label=above:{\Large $2$}] (r3_13) {}
                edge from parent[draw = none]
            }
            edge from parent[draw = none]
        }
        child[line width=.5pt] {
            node[lightnode] (r2_12) {}
            edge from parent[draw = none]
        }
        child[line width=.5pt] {
            node[lightnode,label=above:{\Large $2$}] (r2_13) {}
            edge from parent[draw = none]
        }
        edge from parent[draw = none]
    }
    child[line width=.5pt]{
        node[lightnode] (r1_2) {}
    }
    ;

\draw[solid] (r1_1) to (r2_4);
\draw[solid] (r1_1) to (r2_6);
\draw[solid] (r1_2) to (r2_11);

\draw[solid] (r2_3) to (r2_4);

\draw[solid] (r2_6) to (r2_7);
\draw[solid] (r2_7) to (r2_8);

\draw[solid] (r2_11) to (r2_12);
\draw[solid] (r2_12) to (r2_13);

\draw[solid] (r2_4) to (r3_4);
\draw[solid] (r2_6) to (r3_6);
\draw[solid] (r2_11) to (r3_11);

\draw[solid] (r3_2) to (r3_3);
\draw[solid] (r3_3) to (r3_4);

\draw[solid] (r3_6) to (r3_7);

\draw[solid] (r3_11) to (r3_12);
\draw[solid] (r3_12) to (r3_13);

\end{tikzpicture}
}
\end{minipage}
\caption{The trees $T_a^g$ (left) and $T_b^g$ (right) corresponding to OR gate $g = (g_1 \lor g_2)$. }
\label{fig:OR_trees}
\end{figure}
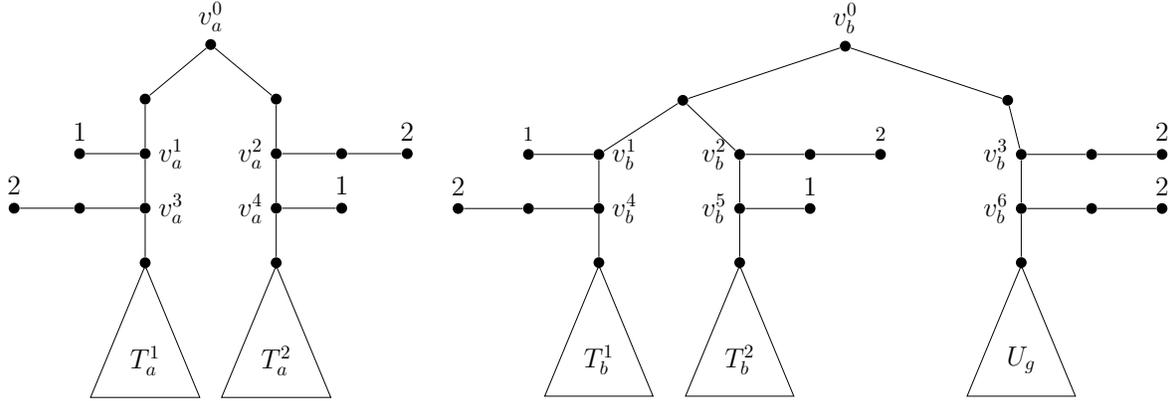

\bigskip
\noindent
\textbf{OR Gate.}
Given an input gate $g = (g_1 \lor g_2)$, and the trees $T_a^{1}, T_b^{1}$ and $T_a^{2}, T_b^{2}$ corresponding to gates $g_1$ and $g_2$ respectively, we will construct trees $T_a^g$ and $T_b^g$ so that $T_a^g$ is contained in $T_b^g$ if and only if $T_a^1$ is contained in $T_b^1$ OR $T_a^2$ is contained in $T_b^2$. We construct trees $T_a^g$ and $T_b^g$ as in Figure \ref{fig:OR_trees}. These trees are rooted at vertices $v_a^0$ and $v_b^0$ respectively. Tree $T_b^g$ contains a subtree $U_g$, which we call a universal subtree. We  design $U_g$ so that it contains both tree $T_a^1$ and tree $T_a^2$ for every bit assignment $a$. This will allow either $T_a^1$ or $T_a^2$ to match with $U_g$, thus achieving the OR gate logic. 

We now construct our universal subtree $U_g$. First, observe that for any gate $g$ and any two bit assignments $a, a' \in A$, the only difference between trees $T_a^g$ and $T_{a'}^g$ is in the input gate subtrees. There are two different input gate subtrees in $T_a^g$: the $a_i = 0$ subtree composed of a root vertex and two leaves, and the $a_i = 1$ subtree composed of a root vertex with a single leaf (see Figure \ref{fig:subtree_input}). Note that the $a_i = 0$ input subtree contains the $a_i = 1$ input subtree. Then if we define a bit assignment $u = {0}^m$, it follows that for every $a \in A$, the tree $T_a^g$ is contained within the tree $T_u^g$. Then for trees $T_a^1$ and $T_a^2$ we  construct trees $T_u^1$ and $T_u^2$ so that $T_a^1$ is contained in $T_u^1$ and $T_a^2$ is contained in $T_u^2$ for all $a \in A$. We define our universal subtree $U_g$ as the tree created by merging the root vertex of $T_u^1$ with the root vertex of $T_u^2$. By construction, this tree $U_g$ contains $T_a^1$ and $T_a^2$ for all $a \in A$ as intended. We now verify that all invariants are satisfied.

\begin{itemize}
    \item \textbf{Invariant 1.} This invariant holds by an argument identical to that of the AND gate construction. 
    
    \item \textbf{Invariant 2.} A similar argument as with the AND gate will show that $T_a^g$ does not depend on bit assignment $b$.  Likewise, tree $T_b^g$ does not depend on bit assignment $a$; the construction of universal subtree $U_g$ is independent of $a$ as detailed in its construction.
    
    \item \textbf{Invariant 3.} By our inductive hypothesis, it suffices to show that $T_a^g$ is contained in $T_b^g$ if and only if $T_a^1$ is contained in $T_b^1$ OR $T_a^2$ is contained in $T_b^2$. The `if' direction can be seen by observing that if $T_a^1$ is contained in $T_b^1$, then we can align $T_a^1$ with $T_b^1$ and align $T_a^2$ with $U_g$, which is guaranteed to contain $T_a^2$; the case where $T_a^2$ is contained in $T_b^2$ is identical. 
    
    The `only if' direction follows from a similar argument given for the AND construction. First note that since trees $T_a^g$ and $T_b^g$ have the same height, every subtree isomorphism must map the root vertex $v_a^0$ of $T_a^g$ to the root vertex $v_b^0$ of $T_b^g$. Additionally, it is immediate from construction that exactly one subtree $T_a^1$ or $T_a^2$ can be aligned with universal subtree $U_g$. Then we simply need to verify that there is no valid subtree isomorphism between $T_a^g$ and $T_b^g$ that maps $T_a^1$ to $T_b^2$ or $T_a^2$ to $T_b^1$. Suppose that  $T_a^1$ was mapped to a subtree of $T_b^2$ (the other case is symmetric). Then vertex $v_a^3$ would map to vertex $v_b^5$, and the path of length two hanging off $v_a^3$ would have nowhere to map to. We conclude that subtree $T_a^1$ must map to subtree $T_b^1$ or subtree $T_a^2$ must map to subtree $T_b^2$ in any subtree isomorphism from $T_a^g$ to $T_b^g$. The invariant is maintained.

    
\end{itemize}

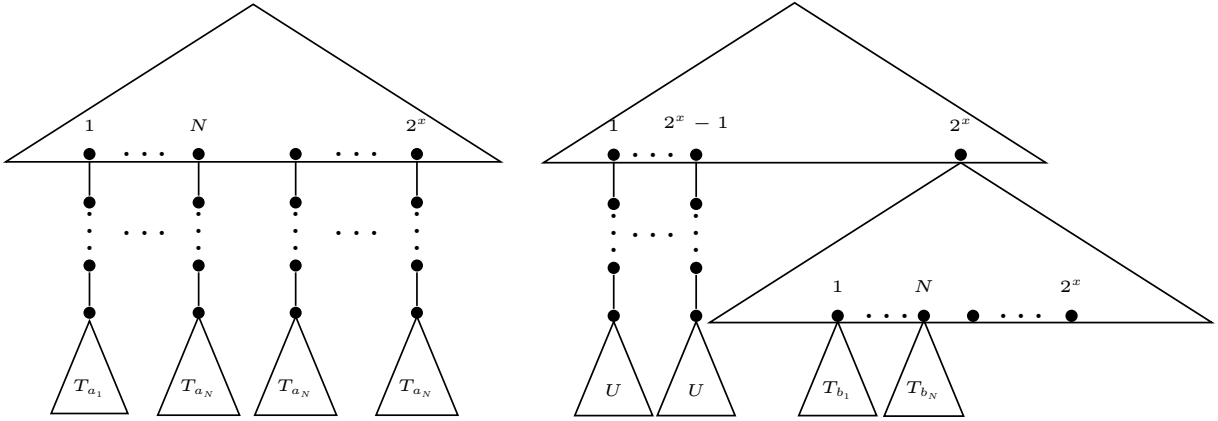
\begin{figure}
\centering
\begin{minipage}{.425\textwidth}
\resizebox{1\textwidth}{!}{
\begin{tikzpicture}[
  lightnode/.style = {circle, fill = black, inner sep=1pt, solid},
  emptynode/.style={},
  level 1/.style = {level distance=5.3mm, sibling distance=7mm},
  level 2/.style = {level distance=4mm, sibling distance=6mm},
  level 3/.style = {level distance=1.8mm, sibling distance=1mm},
  level 4/.style = {level distance=3.9mm, sibling distance=.1cm},
  triangle/.style={
  draw,solid,shape border uses incircle,
  isosceles triangle,
  isosceles triangle apex angle=115,
  shape border rotate=90,yshift=0cm, minimum width=2.5cm, minimum height=1.3cm, fill=none
  },
  smalltriangle/.style={
  draw,solid,shape border uses incircle,
  isosceles triangle,shape border rotate=90,yshift=-2.3mm, minimum width=18pt, fill=none,inner sep=.1pt},
]

\node[triangle] {}
    child[]{
        node[lightnode,label=above:{\fontsize{4}{6}\selectfont $1$},xshift=4mm] {}
        edge from parent[draw=none]
        child[]{
            node[lightnode] {}
            child[]{
                node[emptynode,yshift=.5mm] {\fontsize{9}{11}\selectfont$\vdots$}
                child[]{
                    node[lightnode] {}
                    child[]{
                        node[lightnode] {}
                        child[]{
                            node[smalltriangle] {\fontsize{4}{6}\selectfont $T_{a_1}$}
                            edge from parent[draw=none]
                        }
                    }
                    edge from parent[draw=none]
                }
                edge from parent[draw=none]
            }
        }
    }
    child[]{
        node[emptynode,xshift=1.5mm] {\fontsize{9}{11}\selectfont$\hdots$}
        edge from parent[draw=none]
        child[]{
            node[emptynode,yshift=-2.5mm] {\fontsize{9}{11}\selectfont$\hdots$}
            edge from parent[draw=none]
        }
    }
    child[]{
        node[lightnode,xshift=-1mm,label=above:{\fontsize{4}{6}\selectfont$N$}] {}
        child[]{
            node[lightnode] {}
            child[]{
                node[emptynode,yshift=.5mm] {\fontsize{9}{11}\selectfont$\vdots$}
                child[]{
                    node[lightnode] {}
                    child[]{
                        node[lightnode] {}
                        child[]{
                            node[smalltriangle] {\fontsize{4}{6}\selectfont $T_{a_N}$}
                            edge from parent[draw=none]
                        }
                    }
                    edge from parent[draw=none]
                }
                edge from parent[draw=none]
            }
        }
        edge from parent[draw=none]
    }
    child[]{
        node[lightnode,xshift=0mm] {}
        child[]{
            node[lightnode] {}
            child[]{
                node[emptynode,yshift=.5mm] {\fontsize{9}{11}\selectfont$\vdots$}
                child[]{
                    node[lightnode] {}
                    child[]{
                        node[lightnode] {}
                        child[]{
                            node[smalltriangle] {\fontsize{4}{6}\selectfont $T_{a_N}$}
                            edge from parent[draw=none]
                        }
                    }
                    edge from parent[draw=none]
                }
                edge from parent[draw=none]
            }
        }
        edge from parent[draw=none]
    }
    child[]{
        node[emptynode,xshift=-2mm] {\fontsize{9}{11}\selectfont$\hdots$}
        edge from parent[draw=none]
        child[]{
            node[emptynode,yshift=-2.5mm] {\fontsize{9}{11}\selectfont$\hdots$}
            edge from parent[draw=none]
        }
    }
    child[]{
        node[lightnode,xshift=-4mm,label=above:{\fontsize{4}{6}\selectfont $2^x$}] {}
        child[]{
            node[lightnode] {}
            child[]{
                node[emptynode,yshift=.5mm] {\fontsize{9}{11}\selectfont$\vdots$}
                child[]{
                    node[lightnode] {}
                    child[]{
                        node[lightnode] {}
                        child[]{
                            node[smalltriangle] {\fontsize{4}{6}\selectfont $T_{a_N}$}
                            edge from parent[draw=none]
                        }
                    }
                    edge from parent[draw=none]
                }
                edge from parent[draw=none]
            }
        }
        edge from parent[draw=none]
    }
;

\end{tikzpicture}
}
\end{minipage}
\begin{minipage}{.565\textwidth}
\resizebox{\textwidth}{!}{
\begin{tikzpicture}[
  lightnode/.style = {circle, fill = black, inner sep=1pt, solid},
  emptynode/.style={},
  level 1/.style = {level distance=5.3mm, sibling distance=.6cm},
  level 2/.style = {level distance=4mm, sibling distance=3mm},
  level 3/.style = {level distance=1.8mm, sibling distance=.1mm},
  level 4/.style = {level distance=3.9mm, sibling distance=.1cm},
  triangle/.style={
  draw,solid,shape border uses incircle,
  isosceles triangle,
  isosceles triangle apex angle=115,
  shape border rotate=90,yshift=0cm, minimum width=2.5cm, minimum height=1.3cm, fill=none
  },
  smalltriangle/.style={
  draw,solid,shape border uses incircle,
  isosceles triangle,shape border rotate=90,yshift=-2.1mm, minimum width=18pt, fill=none,inner sep=.1pt},
]

\node[triangle] {}
    child[]{
        node[lightnode,label=above:{\fontsize{4}{6}\selectfont $1$},xshift=-5.7mm] {}
        child[]{
            node[lightnode] {}
            child[]{
                node[emptynode,yshift=.5mm] {\fontsize{9}{11}\selectfont$\vdots$}
                child[]{
                    node[lightnode] {}
                    child[]{
                        node[lightnode] {}
                        child[]{
                            node[smalltriangle] {\fontsize{4}{6}\selectfont $U$}
                            edge from parent[draw=none]
                        }
                    }
                    edge from parent[draw=none]
                }
                edge from parent[draw=none]
            }
        }
        edge from parent[draw=none]
    }
    child[]{
        node[emptynode,xshift=-8.5mm] {\fontsize{9}{11}\selectfont$\hdots$}
        edge from parent[draw=none]
        child[]{
            node[emptynode,yshift=-2.5mm] {\fontsize{9}{11}\selectfont$\hdots$}
            edge from parent[draw=none]
        }
    }
    child[]{
        node[lightnode,xshift=-11mm,label=above:{\fontsize{4}{6}\selectfont $2^x-1$}] {}
        child[]{
            node[lightnode] {}
            child[]{
                node[emptynode,yshift=.5mm] {\fontsize{9}{11}\selectfont$\vdots$}
                child[]{
                    node[lightnode] {}
                    child[]{
                        node[lightnode] {}
                        child[]{
                            node[smalltriangle] {\fontsize{4}{6}\selectfont $U$}
                            edge from parent[draw=none]
                        }
                    }
                    edge from parent[draw=none]
                }
                edge from parent[draw=none]
            }
        }
        edge from parent[draw=none]
    }
    child[]{
        node[lightnode,label=above:{\fontsize{4}{6}\selectfont $2^x$},xshift=4.5mm] {}
        edge from parent[draw=none]
        child[]{
            node[emptynode]{}
            child[]{
                node[emptynode]{}
                child[]{
                    node[emptynode,yshift=.5mm]{}
                    child[]{
                        node[lightnode,xshift=-1mm,label=above:{\fontsize{4}{6}\selectfont $1$}] {}
                        edge from parent[draw=none]
                        child[]{
                            node[smalltriangle] {\fontsize{4}{6}\selectfont $T_{b_1}$}
                            edge from parent[draw=none]
                        }
                    }
                    edge from parent[draw=none]
                }
                edge from parent[draw=none]
            }
            edge from parent[draw=none]
        }
        child[]{
            node[emptynode]{}
            child{
                node[emptynode]{}
                child{
                    node[emptynode,yshift=.5mm]{}
                    child{
                        node[emptynode,xshift=0mm] {\fontsize{9}{11}\selectfont$\hdots$}
                        edge from parent[draw=none]
                    }
                    edge from parent[draw=none]
                }
                edge from parent[draw=none]
            }
            edge from parent[draw=none]
        }
        child[]{
            node[emptynode]{}
            child{
                node[emptynode]{}
                child{
                    node[emptynode,yshift=.5mm]{}
                    child{
                        node[lightnode,xshift=0mm,label=above:{\fontsize{4}{6}\selectfont $N$}] {}
                        edge from parent[draw=none]
                        child[]{
                            node[smalltriangle] {\fontsize{4}{6}\selectfont $T_{b_N}$}
                            edge from parent[draw=none]
                        }
                    }
                    edge from parent[draw=none]
                }
                edge from parent[draw=none]
            }
            edge from parent[draw=none]
        }
        child[]{
            node[triangle,yshift=-3.7mm] {}
            edge from parent[draw=none]
        }
        child[]{
            node[emptynode]{}
            child{
                node[emptynode]{}
                child{
                    node[emptynode,yshift=.5mm]{}
                    child{
                        node[lightnode,xshift=-2mm] {}
                        edge from parent[draw=none]
                    }
                    edge from parent[draw=none]
                }
                edge from parent[draw=none]
            }
            edge from parent[draw=none]
        }
        child[]{
            node[emptynode]{}
            child{
                node[emptynode]{}
                child{
                    node[emptynode,yshift=.5mm]{}
                    child{
                        node[emptynode,xshift=-1.1mm] {\fontsize{9}{11}\selectfont$\hdots$}
                        edge from parent[draw=none]
                    }
                    edge from parent[draw=none]
                }
                edge from parent[draw=none]
            }
            edge from parent[draw=none]
        }
        child[]{
            node[emptynode] {}
            child[]{
                node[emptynode] {}
                child[]{
                node[emptynode,yshift=.5mm] {}
                    child[]{
                        node[lightnode,label=above:{\fontsize{4}{6}\selectfont $2^x$}] {}
                        edge from parent[draw=none]
                    }
                    edge from parent[draw=none]
                }
                edge from parent[draw=none]
            }
            edge from parent[draw=none]
        }
    }
;

\end{tikzpicture}
}
\end{minipage}
\caption{The final $T_A$ (left) and $T_B$ (right). }
\label{fig:final}
\end{figure}

\subsection{Completing the Reduction}
The final trees are constructed using the technique provided in \cite{DBLP:journals/talg/AbboudBHWZ18}. The construction is shown in Figure \ref{fig:final} and described next.
\begin{itemize}
    \item For the final tree $T_A$, start with a complete binary tree where the number of leaves is the smallest power of $2$ that is greater or equal to $N$, say $2^x$. From each of the $2^x$ leaves, attach a path of length $x$. Let the first $N$ leaves at the ends of these paths be numbered $1$ to $N$. For $1 \leq i \leq N$, replace leaf $i$ with root of $T_{a_i}$. For the remaining $2^x - N$ leaves at the end of paths, replace the leaf with the roots of $2^x - N$ copies of $T_{a_N}$.
    
    \item For the final tree $T_B$, again start with a complete binary tree with $2^x$ leaves. From the first $2^x -1$ leaves, attach a path of length $x$. Replace the end of each of the paths with the root of a universal tree $U$, which is $T_a$ with input bit assignment $u = 0^m$. From the remaining leaf in the complete binary tree, replace this leaf with the root of another complete binary tree, again with $2^x$ leaves. Let the first $N$ leaves of this second complete binary tree be numbered $1$ to $N$. For $1 \leq i \leq N$, replace leaf $i$ with the root of $T_{b_i}$. 
\end{itemize}
To see why this works, consider that for $T_A$ to be isomorphic to a subtree of $T_B$, the root of $T_A$ must be mapped onto the root of $T_B$. Then, one of $T_A$'s $2^x$ paths hanging from the leaves of its complete binary tree must traverse down the lower complete binary tree in $T_B$. From here, a subtree rooted at the end of one of these paths in $T_A$ must have to be isomorphic to one of the subtrees hanging from the leaves of the second binary tree in $T_B$. This is possible if and only if for some $a \in A$ and $b \in B$ we have that $T_{a}$ is isomorphic to a subtree of $T_{b}$. By the invariants proven above, such a pair $a \in A$ and $b \in B$ exists iff the starting formula $F$ evaluates to true on the assignment $(a, b)$.

The final tree $T_A$ is of size $\mathcal{O}(Ns)$. This is because there are $N$ trees $T_a$ in $T_A$, and each tree $T_a$ is of size $\mathcal{O}(s)$. The upper bound on the size of $T_a$ follows from the fact that formula $F$ has $s$ gates, and each gate contributes constantly many vertices to $T_a$.
The final tree $T_B$ is of size $\mathcal{O}(N s^2)$.  To see this, fix a particular assignment $(a, b)$, and consider the tree $T_b$. Each AND gate contributes a constant number of vertices to $T_b$. Each OR gate appends a universal subtree $U$ of size at most the size of $T_a$ to $T_b$. Since the size of $T_a$ is $\mathcal{O}(s)$ and there are $s$ gates in formula $F$, we have that $T_b$ is of size $\mathcal{O}(s^2)$.

\section{Discussion}
\label{sec:discussion}
The key property highlighted by the two reductions is that both problems we reduced to allow for the construction of two independent objects $O_A$ and $O_B$, where $O_A$ is constructed independently from the partial input assignments in $B$, and $O_B$ is constructed independently from the partial input assignments in $A$. 

In order to construct these objects, both reductions start by fixing an input assignment $(a,b)$. Then, two new objects for each gate $g$ are constructed using the objects for the circuits that are input into $g$. The aim of this construction is to maintain the invariant that whichever desired property we want our objects to have (e.g., the pattern occurring in a graph, or having an isomorphic subtree) holds iff $(a,b)$ satisfy the circuit with output gate $g$. This is accomplished by supposing (i) we are adding the gate $g = g_1 \ast g_2$ where $\ast \in \{\land, \lor\}$, (ii) the objects $O_a^{g_1}$ and $O_b^{g_1}$ have the desired property iff $(a,b)$ evaluates to true on the circuit with output gate $g_1$, and (iii) the objects $O_a^{g_2}$ and $O_b^{g_2}$ have the desired property iff $(a,b)$ evaluate to true on the circuit with output gate $g_2$. The task is then to construct $O_a^g$ from only $O_a^{g_1}$ and $O_a^{g_2}$, and $O_b^g$ from only $O_b^{g_1}$ and $O_b^{g_2}$, such that $O_a^g$ and $O_b^g$ have the desired property iff $g = g_1 \ast g_2$ evaluates to true. By the invariant, this is equivalent when $\ast = \land$ to $O_a^{g_1}$ and $O_b^{g_1}$ having the desired property, and $O_a^{g_2}$ and $O_b^{g_2}$ having the desired property. In the case of $\ast = \lor$, only one of the pairs $O_a^{g_1}$, $O_b^{g_1}$ or $O_a^{g_2}$, $O_b^{g_2}$ needs to have the property.

In the last step, the final objects $O_A$ and $O_B$ are constructed by combining all $O_{a_i}$, $1\leq i \leq N$ to form $O_A$, and $O_{b_j}$, $1 \leq j \leq N$ to form $O_B$. These final objects must allow for selection between different partial assignments. Additionally, the final objects satisfy the desired property iff at least one object pair $O_{a_i}$ and $O_{b_j}$ together satisfy the desired property.

The above outlines, on a high level, the approach used in reductions from Formula-SAT to polynomial-time problems that appear here, and in \cite{DBLP:conf/icalp/AbboudB18,DBLP:journals/corr/abs-2008-02769}. The techniques presented in \cite{DBLP:conf/stoc/AbboudHWW16} instead start with the problem of the satisfiability of branching programs, but they work similarly in the sense that they must model the logical gates AND and OR (this time connecting logical statements about reachability). The authors also take similar steps in order to build two independent objects based on a fixed input assignment $(a,b)$.

\bibliography{ref}

\appendix
\section{Proving the implications of logarithmically faster algorithms for Subtree Isomorphism} \label{app:circuit}

\begin{thm}[\cite{DBLP:conf/stoc/AbboudHWW16}]
\label{thm:circuit_1}
Let $n \leq S(n) \leq 2^{o(n)}$ be time constructible and monotone non-decreasing.
Let $\mathcal{C}$ be a class of circuits.
Suppose there is an SAT algorithm for $n$-input circuits which are $ANDs$ of $\mathcal{O}(S(n))$
arbitrary functions of three $\mathcal{O}(S(n))$-size circuits from C, that runs in $\mathcal{O}(2^n/n^{10})$ time. Then $\E^\NP$ does not have $S(n)$-size circuits.
\end{thm}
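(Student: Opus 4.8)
I would prove this along the lines of \cite{DBLP:conf/stoc/AbboudHWW16}, via the ``nontrivial SAT $\Rightarrow$ circuit lower bound'' paradigm of Williams: combine the nondeterministic time hierarchy theorem with the easy-witness method. Assume toward a contradiction that $\E^\NP$ \emph{has} $S(n)$-size circuits from $\mathcal{C}$. Fix a language $L$ witnessing $\NTIME[2^n]\not\subseteq\NTIME[o(2^n)]$ (which exists by the nondeterministic time hierarchy theorem), and, after a standard normalization, write $x\in L$ iff some guessed string $w$ of length $2^{O(n)}$ makes a \emph{fixed, highly uniform} deterministic verifier accept, where acceptance is checkable by inspecting $2^{O(n)}$ local windows, each window's validity being an $O(1)$-size Boolean function of $O(1)$ bits of $w$ at addresses that are simple functions of the $O(n)$-bit window index. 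The aim is to decide $L$ in nondeterministic time $o(2^n)$, a contradiction.

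The first ingredient is \emph{easy witnesses}: let $\Pi$ be the $\E^\NP$ language whose $(x,i)$-th bit is the $i$-th bit of the lexicographically-first accepting $w$ for $x$ (the $\NP$ oracle handling the quantifier over verifier runs); by the standing assumption $\Pi$ has $S(n)$-size $\mathcal{C}$-circuits, and hardwiring $x$ into such a circuit yields a $\mathcal{C}$-circuit $D_x$ of size $O(S(\mathrm{poly}\,n))$ on $O(n)$ inputs whose truth table is a valid $w$ (using closure of $\mathcal{C}$ under restriction). The second ingredient is that \emph{verifying a candidate witness circuit is circuit-unsatisfiability of exactly the hypothesized shape}: for a guessed circuit $D$, ``$\mathrm{tt}(D)$ makes the verifier accept $x$'' is equivalent to unsatisfiability of a circuit $E_{x,D}$ that takes the $O(n)$-bit index of a window and tests its validity; substituting $D$ into the window test makes $E_{x,D}$ an AND of $O(S(n))$ gadgets, each an arbitrary $O(1)$-size function of three $O(S(n))$-size $\mathcal{C}$-circuits (copies of $D$ composed with the address-computing wiring) --- precisely the family whose satisfiability the hypothesis decides in $\mathcal{O}(2^m/m^{10})$ time on $m$ inputs.

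Putting these together, the algorithm for $L$ on input $x$ guesses $D$ (cost $\mathrm{poly}(S(\mathrm{poly}\,n)) = 2^{o(n)}$ since $S\le 2^{o(n)}$), constructs $E_{x,D}$, runs the hypothesized satisfiability test on it with $m=\Theta(n)$ inputs (cost $\mathcal{O}(2^m/m^{10})$), and accepts iff the test answers ``unsatisfiable.'' With the normalization arranged so that $m=n+O(\log n)$ and the polynomial overheads (window bookkeeping, word-size simulation) are dominated by the $n^{10}$ factor, this is a nondeterministic $o(2^n)$-time algorithm for $L$, contradicting the hierarchy theorem; hence $\E^\NP$ has no $S(n)$-size $\mathcal{C}$-circuits. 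I expect the main obstacle to be exactly this normalization and parameter bookkeeping: one must arrange the verifier so the verification circuit has its input count tied to the ``$n$'' of the hierarchy (rather than a larger multiple of $n$) and size only $\mathrm{poly}(S(\mathrm{poly}\,n))$, and confirm that the resulting consistency check genuinely sits in the ``AND of $O(S)$ arbitrary functions of three $O(S)$-size $\mathcal{C}$-circuits'' class --- both are handled in \cite{DBLP:conf/stoc/AbboudHWW16}, which we invoke as a black box.
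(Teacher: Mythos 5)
The paper does not prove Theorem \ref{thm:circuit_1} at all: it is imported verbatim from \cite{DBLP:conf/stoc/AbboudHWW16} and used as a black box, so there is no in-paper proof to compare against. Your sketch reconstructs the argument underlying the cited result --- Williams' ``faster SAT implies lower bounds'' program: assume $\E^\NP$ has $S(n)$-size $\mathcal{C}$-circuits, use the $\E^\NP$-computable lexicographically-first-witness function to get succinct witness circuits for an $\NTIME[2^n]$-hard language via a highly uniform, locally checkable (Jahanjou--Miles--Viola style) verifier, turn witness verification into an unsatisfiability call of the hypothesized shape, and contradict the nondeterministic time hierarchy --- and that is indeed the right route, with the delicate normalization legitimately deferred to the cited paper.

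Two soft spots are worth flagging, because they are exactly where the theorem's specific form lives. First, the bookkeeping ``$\mathrm{poly}(S(\mathrm{poly}\,n)) = 2^{o(n)}$ since $S\le 2^{o(n)}$'' is wrong as stated: the bound $S(n)\le 2^{o(n)}$ does not survive polynomial input blow-up (take $S(n)=2^{\sqrt{n}}$, so $S(n^2)=2^{n}$). The argument goes through only because the witness-bit function takes inputs of length $O(n)$ --- the witness has length $2^{n}\mathrm{poly}(n)$, so indices have $n+O(\log n)$ bits --- giving circuits of size $S(O(n))=2^{o(n)}$; you should write $S(O(n))$, not $S(\mathrm{poly}\,n)$. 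Second, you assert rather than derive that the consistency check sits in the class ``AND of $\mathcal{O}(S(n))$ arbitrary functions of three $\mathcal{O}(S(n))$-size $\mathcal{C}$-circuits'' on $n$ inputs; the naive construction yields a single such function of a few $D$-evaluations on $n+O(\log n)$ inputs, and massaging it into the stated class with the stated input count, so that the $\mathcal{O}(2^n/n^{10})$ budget actually closes against the $o(2^n)$ target, is precisely the nontrivial content of the statement's peculiar form in \cite{DBLP:conf/stoc/AbboudHWW16}. Invoking that paper for these steps is fair, but as written your sketch presupposes the shape it is supposed to establish.
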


\begin{thm}[\cite{DBLP:conf/stoc/AbboudHWW16}]
\label{thm:circuit_2}
Suppose there is a satisfiability algorithm for bounded fan-in formulas of size
$n^k$
running in $\mathcal{O}(2^n/n^k)$ time, for all constants $k > 0$. Then $\NTIME[2^{\mathcal{O}(n)}]$ is not contained in
non-uniform $\NC^1$.
\end{thm}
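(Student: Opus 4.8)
\noindent The plan is to run the ``fast SAT algorithm $\Rightarrow$ circuit lower bound'' framework of Williams, instantiated for bounded fan-in formulas. First I would record that, by Spira's theorem, non-uniform $\NC^1$ coincides with the class of languages computed by polynomial-size bounded fan-in formulas, so the hypothesis is precisely a nontrivial satisfiability algorithm for this circuit class (it beats the trivial $2^n \cdot \mathrm{poly}$ bound by an $n^{k}$ factor for \emph{every} constant $k$). The argument is by contradiction: assume $\NTIME[2^{\mathcal{O}(n)}] \subseteq$ non-uniform $\NC^1$.

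The first substantive step is to upgrade this decision-class collapse to an \emph{easy witness} statement: for every nondeterministic machine $M$ running in time $2^n$ there is a constant $c = c_M$ such that whenever $M$ accepts $x$ (with $|x| = n$), there is an accepting computation tableau of $M$ on $x$ whose bit string is the truth table of a bounded fan-in formula $W$ of size $n^{c}$ on $\mathcal{O}(n)$ variables. This follows from the standard easy-witness argument, scaled appropriately to the linear-exponential regime and carried out for the class of polynomial-size formulas (which is closed under the polynomial-size substitutions that argument uses). I expect this to be the main obstacle, and I would cite it rather than reprove it; everything downstream is bookkeeping. (This is also the reason the analogous $\E^{\NP}$ statement, Theorem~\ref{thm:circuit_1}, is cleaner: there the witness bits are literally computed in $\E^{\NP}$, so no easy-witness lemma is needed.)

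Next I would fix, via the nondeterministic time hierarchy theorem, a language $L \in \NTIME[2^n] \setminus \NTIME[2^n/n^2]$, let $M$ witness $L \in \NTIME[2^n]$, and let $c = c_M$ be the constant above. Then I would describe the following faster nondeterministic algorithm for $L$ on input $x$, $|x| = n$: (i) nondeterministically guess a bounded fan-in formula $W$ of size $n^{c}$ purporting to encode the accepting tableau of $M$ on $x$; (ii) using a Cook--Levin-style window decomposition, form a bounded fan-in formula $\Psi$ on $n + \mathcal{O}(\log n)$ variables (indexing the $2^n \cdot \mathrm{poly}(n)$ local windows) and size $n^{\mathcal{O}(c)}$ that evaluates to $1$ on a window index $i$ iff the constant-size local consistency predicate holds there and the boundary cells agree with the initial and accepting configurations, where $\Psi$ is built from $\mathcal{O}(1)$ copies of $W$ with their inputs rewired by the polynomial-size addressing circuitry and with $x$ hardwired; (iii) observe that the guess $W$ is a valid witness iff $\lnot \Psi$ is unsatisfiable, and run the hypothesized Formula-SAT algorithm on $\lnot \Psi$ with parameter $k$ chosen a sufficiently large constant (legitimate since the hypothesis quantifies over all $k$) — larger than the exponent $\mathcal{O}(c)$ of $|\Psi|$ and than the constant that absorbs the $2^{\mathcal{O}(\log n)} = \mathrm{poly}(n)$ overhead from the extra index bits; (iv) accept iff the algorithm reports UNSAT.

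Finally I would tally running times: guessing $W$ costs $n^{c}$, building $\lnot\Psi$ costs $\mathrm{poly}(n)$, and the Formula-SAT call costs $\mathcal{O}\!\left(2^{\,n+\mathcal{O}(\log n)}/(n+\mathcal{O}(\log n))^{k}\right) = \mathcal{O}(2^n/n^2)$ for the chosen $k$, so the whole algorithm runs in $\mathcal{O}(2^n/n^2)$ time and places $L \in \NTIME[2^n/n^2]$, contradicting the choice of $L$. Hence $\NTIME[2^{\mathcal{O}(n)}]$ is not contained in non-uniform $\NC^1$. Beyond the cited easy-witness lemma, the only points needing care are the window decomposition (so that tableau validity becomes an AND, over a near-$n$-bit index, of constant-size checks each reading $\mathcal{O}(1)$ bits of $W$) and the observation that the $\mathcal{O}(\log n)$ auxiliary index bits cost only a polynomial factor, which is swallowed by taking $k$ large.
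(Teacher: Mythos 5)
The paper does not actually prove this statement: Theorem~\ref{thm:circuit_2} is imported verbatim from Abboud--Hansen--Vassilevska~W.--Williams and used as a black box (only the corollaries built on it are proved here), so there is no in-paper argument to compare yours against. Your sketch is the standard Williams ``algorithmic method'' proof that underlies the cited result, and its outline is sound: Spira's theorem to identify non-uniform $\NC^1$ with polynomial-size bounded fan-in formulas, an easy-witness step, a Cook--Levin window decomposition turning witness verification into unsatisfiability of a polynomial-size formula on $n+\mathcal{O}(\log n)$ variables, the hypothesized Formula-SAT algorithm with $k$ chosen large enough to absorb both the size exponent and the $\mathrm{poly}(n)$ index overhead, and a contradiction with the nondeterministic time hierarchy (your bound $2^{n+1}/(n+1)^2=o(2^n)$ is indeed what Zak's theorem needs). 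The one load-bearing step you defer --- the easy-witness statement with \emph{formula} witnesses at the $2^{\mathcal{O}(n)}$ scale --- is exactly where the citation must be made precise: what the Impagliazzo--Kabanets--Wigderson lemma gives directly is general poly-size \emph{circuit} witnesses for $\NEXP$ verifiers under $\NEXP\subseteq\P/\mathrm{poly}$. To get what you need, note (i) $\NTIME[2^{\mathcal{O}(n)}]\subseteq$ non-uniform $\NC^1$ implies $\NEXP\subseteq$ non-uniform $\NC^1$ by padding, so IKW applies to your $2^n$-time verifier and yields poly-size circuit witnesses over $\mathcal{O}(n)$ variables; and (ii) since Circuit Evaluation is in $\P\subseteq\NTIME[2^{\mathcal{O}(n)}]$, the assumption gives it poly-size formulas, and hardwiring the witness-circuit description into that formula converts the circuit witness into a formula witness --- this is the ``closure under substitution'' you allude to, and spelling it out (or citing Williams' treatment of witnesses in a class $\mathcal{C}$ under $\NEXP\subseteq\mathcal{C}$) closes the only real gap in the write-up. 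With that made explicit, your argument is a correct, self-contained route to the theorem the paper only quotes.
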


\noindent
\textbf{Corollary \ref{cor:strict_subquad_cons}.} 
\emph{The existence of a strongly subquadratic time algorithm for PMLG (or Subtree Isomorphism) would imply the class $\E^\NP$ (1) does not have non-uniform $2^{o(n)}$-size Boolean formulas and (2) does not have non-uniform $o(n)$-depth circuits of bounded fan-in. It also implies that $\NTIME[2^{\mathcal{O}(n)}]$ is not in non-uniform $\NC$.}

\begin{proof}
Note that the condition in Theorem \ref{thm:circuit_1} that the SAT-algorithm works on $n$-input circuits which are ANDs of $\mathcal{O}(S(n))$ arbitrary functions of three $\mathcal{O}(S(n))$-size circuits is trivially satisfied by a solver that works over Boolean formula. 
By Theorem \ref{thm_pmlg} (Theorem \ref{thm_sub_tree} resp.), for circuits (or equivalently formulas) of size $S(n) = 2^{o(n)}$, a strongly subquadric time algorithm for PMLG (Subtree Isomorphism resp.) would imply a SAT algorithm running in time $$\mathcal{O}(n^{1 + o(1)}\cdot|E||P|^{1-\varepsilon}) = \mathcal{O}(n^{1 + o(1)}\cdot2^{n - \varepsilon n/2}S(n)^{4})$$ which is $\mathcal{O}(2^{n}/n^{10})$; the $n^{1 + o(1)}$ factor is introduced when moving from a word size of $\Theta(\log n)$ to $\Theta(n)$. Thus, Theorem \ref{thm:circuit_1} implies (1).
Part (2) is implied as well since a $o(n)$-depth circuit of bounded fan-in can be expressed as a formula of size $S(n) = 2^{o(n)}$.
The last statement follows from Theorem \ref{thm:circuit_2} and the fact that on circuits of size $n^k$, our subquadratic algorithm would run in time $\mathcal{O}(n^{1 + o(1)}\cdot 2^{n - \varepsilon n/2}n^{2k})$ which is $\mathcal{O}(2^n / n^k)$.
\end{proof}

\noindent
\textbf{Corollary \ref{cor:infinite_logshaving_cons}.} 
\emph{If PMLG (or Subtree Isomorphism) can be solved in time $\mathcal{O}(\frac{|E||P|}{\log^c |E|})$ or $\mathcal{O}(\frac{|E||P|}{\log^c |P|})$ ( $\mathcal{O}(\frac{|T_1||T_2|}{\log^c |T_1|})$ or $\mathcal{O}(\frac{|T_1||T_2|}{\log^c |T_2|})$ resp.) for all $c = \Theta(1)$, then $\NTIME[2^{O(n)}]$ does not have non-uniform polynomial-size log-depth circuits.}

\begin{proof}
We prove this for PMLG, the proof for Subtree Isomorphism is similar. By Theorem \ref{thm:circuit_2}, it suffices to show that for all $k$, 
there exists an algorithm to check satisfiability of all bounded fan-in formulas of size $n^k$ running in time $O(2^{n}/n^k)$. 
Suppose that for all $c = \Theta(1)$, there exists an algorithm running in time $O(\frac{|E||P|}{\log^c |P|})$ or $O(\frac{|E||P|}{\log^c |E|})$. Then by Theorem \ref{thm_pmlg}, if we let $c > 4k+1$ we obtain an algorithm running in time 
$$
\frac{n^{1 + o(1)}\cdot 2^n s^{3}}{\log^{c}(2^\frac{n}{2} s^{2})} = \frac{n^{1 + o(1)}\cdot  2^n n^{3k}}{\log^{c} (2^\frac{n}{2} n^{2k})} \leq \frac{n^{1 + o(1)}\cdot 2^{n}n^{3k}}{\left(\frac{n}{2}\right)^{c}} = \frac{2^{n + c}}{n^{c-3k - 1 - o(1)}} = O\left(\frac{2^n}{n^k}\right)
$$
\end{proof}

\noindent
\textbf{Corollary \ref{corr:E_NP_log_shaving_constant}.} 
\emph{$\E^\NP$ cannot be computed by non-uniform formulas of cubic size if PMLG (or Subtree Isomorphism) can be solved in time $\mathcal{O}\left(\frac{|E| \cdot |P|}{\log ^ {20 + \varepsilon} |E|}\right)$ or $\mathcal{O}\left(\frac{|E| \cdot |P|}{\log ^ {20 + \varepsilon} |P|}\right)$ for $\varepsilon > 0$, where $G$ is a deterministic DAG of maximum degree three (or $\mathcal{O}\left(\frac{|T_1| \cdot |T_2|}{\log^{20 + \varepsilon} |T_1|}\right)$ or $\mathcal{O}\left(\frac{|T_1| \cdot |T_2|}{\log ^ {20 + \varepsilon} |T_2|}\right)$ for $\varepsilon > 0$ resp.).
}
\begin{proof}
Theorem \ref{thm:circuit_1} as given in \cite{DBLP:conf/stoc/AbboudHWW16} says that solving Formula-SAT in time $\mathcal{O}(2^n / n^{10})$ on formulas of size $s = \mathcal{O}(n^{3 + \varepsilon})$ implies that there is a function in class $\E^\NP$ that cannot be computed by formulas of size $\mathcal{O}(n^{3 + \varepsilon})$. Then via a proof identical to that of Corollary \ref{cor:pmlg}, we have the above result.
\end{proof}

\end{document}